\theoremstyle{definition}
\newtheorem{definition}{Definition}
\theoremstyle{plain}
\newtheorem{theorem}{Theorem}
\newtheorem{corollary}[theorem]{Corollary}
\newtheorem{lemma}[theorem]{Lemma}
\DeclareSymbolFont{rsfscript}{OMS}{rsfs}{m}{n}
\DeclareSymbolFontAlphabet{\mathrsfs}{rsfscript}
\DeclareMathOperator{\Ker}{Ker}
\newcommand{\mA}{\mathrsfs{A}}
\newcommand{\mB}{\mathrsfs{B}}
\newcommand{\mC}{\mathrsfs{C}}
\newcommand{\mE}{\mathrsfs{E}}
\newcommand{\mF}{\mathrsfs{F}}
\newcommand{\gR}{\mathrel{\mathfrak{R}}}
\newcommand{\gL}{\mathrel{\mathfrak{L}}}
\newcommand{\gD}{\mathrel{\mathfrak{D}}}
\newcommand{\UWS}{\ensuremath{\mathsf{UWS}}}
\newcommand{\sa}{synchronizing automata}
\newcommand{\san}{synchronizing automaton}
\newcommand{\sDFAs}{synchronizing DFAs}
\newcommand{\sDFA}{synchronizing DFA}
\newcommand{\sws}{reset words}
\pgfplotsset{compat=1.16}
\newcommand{\ra}[3][black]{
\draw [line width=0.5mm, #1] (#2 + 0.7,#3 + 0.5) -- (#2 + 0.3,#3 + 0.75);
\draw [line width=0.5mm, #1] (#2 + 0.7,#3 + 0.5) -- (#2 + 0.3,#3 + 0.25);
}
\newcommand{\la}[3][black]{
\draw [line width=0.5mm, #1] (#2 + 0.3,#3 + 0.5) -- (#2 + 0.7,#3 + 0.75);
\draw [line width=0.5mm, #1] (#2 + 0.3,#3 + 0.5) -- (#2 + 0.7,#3 + 0.25);
}
\newcommand{\ua}[3][black]{
\draw [line width=0.5mm, #1] (#2 + 0.5,#3 + 0.7) -- (#2 + 0.75,#3 + 0.3);
\draw [line width=0.5mm, #1] (#2 + 0.5,#3 + 0.7) -- (#2 + 0.25,#3 + 0.3);
}
\newcommand{\da}[3][black]{
\draw [line width=0.5mm, #1] (#2 + 0.5,#3 + 0.3) -- (#2 + 0.75,#3 + 0.7);
\draw [line width=0.5mm, #1] (#2 + 0.5,#3 + 0.3) -- (#2 + 0.25,#3 + 0.7);
}
\definecolor{DarkRed}{RGB}{192,0,0}
\renewcommand*\subjclass[2][2020]{\def\@subjclass{#2}\@ifundefined{subjclassname@#1}{\ClassWarning{\@classname}{Unknown edition (#1) of Mathematics Subject Classification; using '2020'.}}{\@xp\let\@xp\subjclassname\csname subjclassname@#1\endcsname}}
\renewcommand{\subjclassname}{\textup{2020} Mathematics Subject Classification}
\begin{document}

\title[Winning strategies for synchronization games]{Uniform winning strategies\\ for the synchronization games\\ on subclasses of finite automata}

\author{Henning Fernau}
\address[H. Fernau, C. Haase]{\normalfont Universit\"at Trier, Fachbereich IV, Informatikwissenschaften, Trier, Germany}
\email{fernau@uni-trier.de, haasec@uni-trier.de}
\author{Carolina Haase}
\author{Stefan Hoffmann}
\address[S. Hoffmann]{\normalfont Erfurt, Germany}
\email{hoffmanns.tcs@gmail.com}
\author{Mikhail Volkov}
\address[M. Volkov]{\normalfont Yekaterinburg, Russia} \email{m.v.volkov@urfu.ru}

\begin{abstract}
The pseudovariety $\mathbf{DS}$ consists of all finite monoids whose regular $\gD$-classes form subsemigroups. We exhibit a uniform winning strategy for Synchronizer in the synchronization game on every \san\ whose transition monoid lies in $\mathbf{DS}$, and we prove that $\mathbf{DS}$ is the largest pseudovariety with this property.
\keywords{synchronizing automaton, synchronization game, uniform winning strategy, transition monoid, Green relations, monoid pseudovariety}
\subjclass{68Q45, 91A05, 20M07}
\end{abstract}

\maketitle

\section{Introduction}
\label{intro}

A complete \emph{deterministic finite automaton} (DFA) is a pair $(Q,\Sigma)$ of two finite sets equipped with a \emph{transition function} $Q\times\Sigma\to Q$ whose image at $(q,a)\in Q\times\Sigma$ is denoted by $q{\cdot}a$. We call $Q$ the \emph{state set} and $\Sigma$ the \emph{input alphabet}. Elements of $Q$ and $\Sigma$ are referred to as \emph{states} and, respectively, \emph{letters}, and for a state $q\in Q$ and a letter $a\in\Sigma$, we refer to $q{\cdot}a$ as the result of the \emph{action of $a$ at} $q\in Q$. A \emph{word over} $\Sigma$ is a finite sequence of letters. The action of letters in $\Sigma$ naturally extends to the action of words over $\Sigma$: If $w=a_1a_2\cdots a_n$ with $a_1,a_2,\dots,a_n\in\Sigma$, then $q{\cdot}w:=(\dots((q{\cdot}a_1){\cdot}a_2)\dots){\cdot}a_n$.

A DFA $(Q,\Sigma)$ is called \emph{synchronizing} if there exists a word $w$ over $\Sigma$ whose action brings the DFA to one particular state, regardless of the state at which $w$ is applied. This means that $q{\cdot}w=q'{\cdot}w$ for all $q,q'\in Q$. Any word $w$ with this property is said to be a \emph{reset} word for the automaton. Whether or not a DFA is synchronizing can be checked by analyzing the restriction of the power-set automaton to 2-element subsets, see the surveys \cite{San2005,Vol2008}.

Synchronizing automata serve as transparent and natural models of error-resis\-tant systems in many applications (coding theory, robotics, testing of reactive systems) and reveal interesting connections with symbolic dynamics, substitution systems, and other parts of mathematics. We refer the reader to chapter~\cite{KarVol2021} of the `Handbook of Automata Theory' and the survey \cite{Vol2022} for an introduction to the area and an overview of its state-of-the-art.

The fourth-named author initiated viewing \sa{} through the lens of game theory; the motivation for this came from a game-theoretical approach to software testing suggested in~\cite{BlaGNV2005}. We give a rigorous definition of games we consider in Section~\ref{sec:rules and presentation} as the following informal description suffices to explain the goals of the present paper.

In a synchronization game on a~DFA $\mathrsfs{A}$, two players, Alice (Synchronizer) and Bob (Desynchronizer), take turns choosing letters from the input alphabet of $\mA$. Alice, who wants to synchronize $\mathrsfs{A}$, wins when the sequence of chosen letters forms a reset word. Bob aims to prevent synchronization or, if it is unavoidable, delay it as long as possible. A bit more generally, one could have a subset~$S$ of states as an additional input and then Alice's task is to synchronize~$S$ while Bob has to prevent this.

Provided that both Alice and Bob play optimally, the outcome of such a game depends only on the automaton. This raises the problem of classifying \sa{} into those on which Alice and, respectively, Bob have a winning strategy. DFAs on which Alice can ensure a win are of interest because they are more amenable to synchronization, in a sense --- one may argue that such automata model devices that can be reliably controlled even under natural or artificial interference. For brevity, we call such DFAs \emph{A-automata}.

A few initial results on synchronization games were obtained in~\cite{FomVol2012,FomMarVol2013}. In particular, \cite[Theorem 4]{FomMarVol2013} provides an algorithm that, given a DFA $\mathrsfs{A}$ with $n$ states and $k$ input letters, decides who has a winning strategy in the synchronization game on~$\mA$ in $O(n^2k)$ time. The algorithm is based on the following reduction to 2-element subsets, similar to that used for recognizing synchronizability.

\begin{lemma}[{\mdseries (\cite[Lemma 2]{FomMarVol2013})}]\label{lemma:alice_strategy}
\label{lem:pair_condition_game}
	Alice has a winning strategy in the synchronization game on a DFA if and only if she has a winning strategy for the task of synchronizing every 2-element subset.
\end{lemma}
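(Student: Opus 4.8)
The plan is to prove the two implications separately, exploiting the fundamental \emph{monotonicity} of the game: since every letter acts as a function $Q\to Q$, once two states are identified by the played word they remain identified forever (if $p{\cdot}w=q{\cdot}w$ then $p{\cdot}wv=q{\cdot}wv$ for all $v$). Hence the size of the current image $Q{\cdot}w$ of the word $w$ played so far never increases, and a reset word for $Q$ is simultaneously a reset word for every subset of $Q$. This single observation drives both directions.

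The forward implication is the routine one. Suppose Alice has a winning strategy $\sigma$ in the game on a DFA $\mA$, and fix a 2-element subset $\{p,q\}$. To win the task of synchronizing $\{p,q\}$, Alice plays $\sigma$ while privately simulating the full game: she keeps track of the image $Q{\cdot}w$ of the word $w$ built so far, consults $\sigma$ on that image to choose her letters, and applies Bob's letters to the simulated image as well. Because $\sigma$ forces $|Q{\cdot}w|=1$ after finitely many moves against every Bob strategy, and $|Q{\cdot}w|=1$ implies $p{\cdot}w=q{\cdot}w$, the pair $\{p,q\}$ is synchronized no later than $Q$ is. Thus the same play wins the pair task.

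The reverse implication carries the content. Assume Alice can win the task of synchronizing every 2-element subset, and let $\tau_{\{p,q\}}$ denote a winning strategy for the pair $\{p,q\}$. Alice plays the full game by a greedy image-shrinking scheme. At any point let $S=Q{\cdot}w$ be the current image; if $|S|=1$ she has already won. Otherwise she selects two distinct states $p,q\in S$, regards them as a tracked pair, and plays according to $\tau_{\{p,q\}}$, reading off after each of her and Bob's moves the new positions of the two tracked states and feeding them to the pair strategy. Since $\tau_{\{p,q\}}$ is winning, the two tracked states are forced to coincide after finitely many moves regardless of Bob; by monotonicity they stay coincided, so at that moment $|S|$ has strictly dropped. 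Alice then re-selects a pair from the new, smaller image and repeats. After at most $|Q|-1$ such merging phases the image is a singleton, so Alice wins.

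The step I expect to be the main obstacle is legitimising the composition of the successive pair-strategies across the turn alternation: when one merging phase ends, the player on move may be Bob rather than Alice, so Alice needs the ability to force a merge of a chosen pair \emph{no matter whose turn it is}. This follows from the hypothesis by a one-step parity-flip argument. The hypothesis gives, for every pair, that Alice forces a merge from the designated first-mover position; to obtain it for the opposite parity, note that from an Alice-to-move position she may play an arbitrary letter and then follow the given merge-forcing strategy at the resulting pair, while from a Bob-to-move position every letter of Bob lands on a pair at which, by the given strategies, Alice again forces a merge. Monotonicity underlies both transitions, since no move can separate states and hence none can undo progress. With this turn-independent merging ability established, the finitely many merging phases splice into a single winning strategy for Alice in the full game.
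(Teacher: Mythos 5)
Your proof is correct and follows essentially the same route as the source of this lemma; the paper itself only states the result by citation to \cite[Lemma 2]{FomMarVol2013} and gives no proof. The forward direction by simulating the full game inside the pair game, and the reverse direction by greedily merging one pair of the current image at a time---with monotonicity of the image under the action of words guaranteeing that merges persist, and with the turn-parity issue you flag being resolvable exactly as you describe (indeed each new phase can simply be started on Alice's next turn, invoking the hypothesis afresh for the pair then chosen)---is the standard argument.
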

Thus, for any individual DFA, one can determine whether it is an A-automaton. Here, however, we are interested in general conditions ensuring that all synchronizing DFAs of a certain type are A-automata. One such condition was mentioned in~\cite{FomMarVol2013}: Alice always wins on definite DFAs introduced in~\cite{PerRabSha63}.

In the conference paper \cite{FerHaaHof2022}, the first three of the present authors showed that within two further families of automata considered in the literature---weakly acyclic DFAs and commutative DFAs---every \sDFA\ is an A-automaton. Our conference paper \cite{FerHHV2024} continued this line of research by designing a winning strategy for Alice that applies to \sa\ from yet another family of DFAs. Automata in this family are defined via a structure property of their transition monoids: the regular $\gD$-classes in these monoids form subsemigroups. The set $\mathbf{DS}$ of all finite monoids with this property plays a distinguished role in the algebraic theory of regular languages; see \cite[Chapter 8]{Alm95}. Therefore, DFAs with transition monoids in $\mathbf{DS}$ often appear in the literature; see, e.g., \cite{AlmMSV2009,AlmSte2009}. In~\cite{FerHHV2024} they were coined DS-\emph{automata} and the main result of~\cite{FerHHV2024} said that Alice can win the synchronization game on every synchronizing DS-automaton. Since the family of DS-automata encompasses all the families above (definite, weakly acyclic, and commutative DFAs), this vastly generalized the mentioned results from \cite{FomMarVol2013,FerHaaHof2022}.

The present paper is an enhanced version of the conference papers \cite{FerHaaHof2022,FerHHV2024}. To make it readable without prior acquaintance with \cite{FerHaaHof2022,FerHHV2024}, we have included in Sections~\ref{sec:rules and presentation} and~\ref{sec:prelim} a motivating section derived from~\cite{FerHaaHof2022} as well as necessary prerequisites taken from~\cite{FerHHV2024}. Section~\ref{sec:main} presents novel results. We propose a new winning strategy for Alice in synchronization games on synchronizing DS-automata. This new strategy is uniform: Alice's winning sequence of moves depends only on the underlying automaton and does not depend on Bob's responses, whereas the strategy in \cite{FerHHV2024} was adaptive. Then we show that among monoid pseudovarieties, $\mathbf{DS}$ is the largest class with the property that Alice has a uniform winning strategy in every synchronization game played on a \san{} whose transition monoid lies in this class. In Section~\ref{sec:discussion}, we discuss open questions and a generalization of our results to synchronization games in which the rule is modified in Bob's favor.

\section{Game rules and presentation}
\label{sec:rules and presentation}
\subsection{Game rules}
\label{subsec:rules}

We start with a visual yet rigorous description of the synchronization game under consideration. In this game, two players, Alice and Bob, play on a fixed DFA $\mathrsfs{A}=(Q,\Sigma)$. At the start, each state in $Q$ holds a token. During the game, some tokens can be removed according to the rules specified in the next paragraph. Alice wins if only one token remains, while Bob wins if he can keep at least two tokens unremoved for an indefinite amount of time.

Alice moves first, and then players alternate moves. The player whose turn is to move proceeds by selecting a letter $a\in\Sigma$. Then, for each state $q\in Q$ that held a token before the move, the token advances to the state $q{\cdot}a$. (In the standard graphical representation of $\mathrsfs{A}$ as a labelled digraph with $Q$ as its vertex set and the edges of the form $q\xrightarrow{a}q{\cdot}a$, one can visualize the move as follows: all tokens simultaneously slide along the edges labelled $a$.) If several tokens arrive at the same state after this, all of them but one are removed so that when the move is completed, each state holds at most one token.

To illustrate, let us look at how the synchronization game might play out on the DFA below in which, initially, each of the states 0,1,2 holds a token (shown in gray).
\begin{center}
\begin{tikzpicture}
	\node[fill=gray, circle, draw=blue, scale=1] (0) {$0$};
	\node[fill=gray, circle, draw=blue, scale=1, right of = 0, xshift= 2cm] (1) {$1$};
    \node[fill=gray, circle, draw=blue, scale=1, right of = 1, xshift= 2cm] (2) {$2$};
	\draw
		(1) edge[-latex, above]  node{$a$} (0)
        (1) edge[-latex, bend left, above]  node{$b$} (2)
		(2) edge[-latex, bend left, below] node{$b$}(1)
		(0) edge[-latex, loop left, left, in =135, out = -135, distance = 30] node{$a,b$} (0)
		(2) edge[-latex, loop right, right, out=45, in = -45, distance = 30] node{$a$} (2);
\end{tikzpicture}
\end{center}
If Alice chooses the letter $a$ on her first move, the tokens on states 0 and 2 remain due to the loops at these states. The token from state 1 moves to 0 and then is removed because the state 0 is `occupied'. Hence, the position after Alice's first move looks as follows:
\begin{center}
\begin{tikzpicture}
	\node[fill=gray, circle, draw=blue, scale=1] (0) {$0$};
	\node[fill=white, circle, draw=blue, scale=1, right of = 0, xshift= 2cm] (1) {$1$};
    \node[fill=gray, circle, draw=blue, scale=1, right of = 1, xshift= 2cm] (2) {$2$};
	\draw
		(1) edge[-latex, above]  node{$a$} (0)
        (1) edge[-latex, bend left, above]  node{$b$} (2)
		(2) edge[-latex, bend left, below] node{$b$}(1)
		(0) edge[-latex, loop left, left, in =135, out = -135, distance = 30] node{$a,b$} (0)
		(2) edge[-latex, loop right, right, out=45, in = -45, distance = 30] node{$a$} (2);
\end{tikzpicture}
\end{center}
If Bob responds by choosing the letter $b$, the token on state 0 remains, while the token from state 2 moves to 1. Here is the position after Bob's response:
\begin{center}
\begin{tikzpicture}
	\node[fill=gray, circle, draw=blue, scale=1] (0) {$0$};
	\node[fill=gray, circle, draw=blue, scale=1, right of = 0, xshift= 2cm] (1) {$1$};
    \node[fill=white, circle, draw=blue, scale=1, right of = 1, xshift= 2cm] (2) {$2$};
	\draw
		(1) edge[-latex, above]  node{$a$} (0)
        (1) edge[-latex, bend left, above]  node{$b$} (2)
		(2) edge[-latex, bend left, below] node{$b$}(1)
		(0) edge[-latex, loop left, left, in =135, out = -135, distance = 30] node{$a,b$} (0)
		(2) edge[-latex, loop right, right, out=45, in = -45, distance = 30] node{$a$} (2);
\end{tikzpicture}
\end{center}
Now choosing $a$, Alice wins because after the token from state 1 moves to 0, it is removed, and we get the following position with only one token:
\begin{center}
\begin{tikzpicture}
	\node[fill=gray, circle, draw=blue, scale=1] (0) {$0$};
	\node[fill=white, circle, draw=blue, scale=1, right of = 0, xshift= 2cm] (1) {$1$};
    \node[fill=white, circle, draw=blue, scale=1, right of = 1, xshift= 2cm] (2) {$2$};
	\draw
		(1) edge[-latex, above]  node{$a$} (0)
        (1) edge[-latex, bend left, above]  node{$b$} (2)
		(2) edge[-latex, bend left, below] node{$b$}(1)
		(0) edge[-latex, loop left, left, in =135, out = -135, distance = 30] node{$a,b$} (0)
		(2) edge[-latex, loop right, right, out=45, in = -45, distance = 30] node{$a$} (2);
\end{tikzpicture}
\end{center}

Notice that Alice won the game described above only because of Bob's unfortunate response. In fact, Bob has a winning strategy in the synchronization game on this DFA: if he repeats Alice's moves, that is, chooses the same letter Alice chose on her previous move, he can maintain two tokens unremoved forever. Hence, there are simple synchronizing DFAs that are not A-automata.

\subsection{On board designs}
\label{subsec:board}

How can we present a synchronization game in a compact and attractive fashion? This is the question we like to discuss in this subsection. Clearly, every finite automaton can be presented by a transition table or by its automaton graph. However, we claim that more traditional designs like those known from traditional board games like chess or like games with racing tracks that are also popular children's games may be more appealing than the ones usually applied in automata theory. Notice that we cannot represent every automaton this way, but also notice that the proposed board designs do relate to the classes of automata studied throughout this paper.

\begin{figure}[htb]
\begin{center}
\begin{tikzpicture}[scale=0.72]
%squarednode/.style={rectangle, draw=red!60, fill=red!5, very thin, minimum size=.5em}
\draw [step =1, black,node distance=2cm]grid (7,5); \ua{0}{2.1}\da{1}{2}\la{3}{2}\la{4}{2}\ra{3}{1}\la{4}{1};
%vertical
\draw [line width=0.7mm, black] (0,0) -- (0,5);
\draw [line width=0.7mm, black] (7,1) -- (7,5);
\draw [line width=0.7mm, black] (1,2) -- (1,3);
%horizontal
\draw [line width=0.7mm, black] (3,2) -- (5,2);
\draw [line width=0.7mm, black] (0,0) -- (6,0);
\draw [line width=0.7mm, black] (0,5) -- (7,5);

\filldraw[black] (0.5,0.45) circle (2pt) node[anchor=south] (p) {$p$};
\filldraw[black] (1.5,0.45) circle (2pt) node[anchor=south] (q) {$q$};
\filldraw[black] (0.5,2.5) circle (2pt) node[anchor=north] (r) {$r$};
\node at (4,-0.2,1) {\small (1) A board design with $p{\cdot}e=q$, $q{\cdot}w=p$.};
\end{tikzpicture}
\hspace{.8cm}
\begin{tikzpicture}[scale=0.72]
	%squarednode/.style={rectangle, draw=red!60, fill=red!5, very thin, minimum size=.5em}
	%\draw [step =1, black,node distance=2cm]grid (7,5);
	\ua[red]{0}{2.2}\ua[red]{0}{1.9}
	\ra[red]{1.8}{4}\ra[red]{2}{4}\ra[red]{2.2}{4}
	\da[red]{6}{3}
	\da[red]{6}{1.2}\da[red]{6}{0.9}
	\la[red]{3.8}{0}\la[red]{4}{0}\la[red]{4.2}{0}
	
	\draw [line width=0.7mm, black] (0,0) -- (0,5);
	\draw [line width=0.7mm, black] (1,1) -- (1,4);
	\draw [line width=0.7mm, black] (7,0) -- (7,5);
	\draw [line width=0.7mm, black] (6,1) -- (6,4);
	\foreach \x in {1,...,4}
	{\draw [black] (0,\x) -- (1,\x);
		\draw [black] (6,\x) -- (7,\x);}
	%horizontal
	\draw [line width=0.7mm, black] (0,0) -- (7,0);
	\draw [line width=0.7mm, black] (1,1) -- (6,1);
	\draw [line width=0.7mm, black] (0,5) -- (7,5);
	\draw [line width=0.7mm, black] (1,4) -- (6,4);
	
	\foreach \x in {1,...,6}
	{\draw [black] (\x,0) -- (\x,1);
		\draw [black] (\x,5) -- (\x,4);}
	
	\filldraw[black] (0.5,0.45) circle (2pt) node[anchor=south] (p) {$p$};
	\filldraw[black] (1.5,0.45) circle (2pt) node[anchor=south] (q) {$q$};
	\node at (4,-0.2,1) {\small (2) A racing track design};
\end{tikzpicture}
\end{center}
\caption{Designing boards for the synchronization game: two different proposals}
\label{fig:boards}
\end{figure}

Referring to Example (1) in Figure~\ref{fig:boards}, we can interpret such a board design as a finite automaton over the input alphabet $\{e,n,s,w\}$ (with the letters denoting the east, north, south and west directions of movement) as follows:
\begin{itemize}
 \item Each cell represents a state of the automaton.
\item The input letters let us move through the board as expected: $e$ moves one step ``east'', i.e., to the right, etc. For instance, if the leftmost lower cell represents the state $p$ and the cell to its right (i.e., the second cell on the lowest row of cells) represents state $q$, then $p{\cdot}e=q$ and $q{\cdot}w=p$ hold.
\item If the move would hit a wall, indicated by a thicker drawn line, then the direction indicated by the arrow in the current state should be ``executed.'' Therefore, in our picture, $r{\cdot}wsss=p=r{\cdot}esss=r{\cdot} nsss$.
\item If a wall is hit but no arrow is drawn, then the direction is inverted, i.e., if one bumps against the wall by moving east, a west move is executed, as $p{\cdot}w=q$ and $p{\cdot}sn=r$.\footnote{An alternative might be to get stuck when bumping against a wall; this would correspond to considering incomplete deterministic automata. This leads to different synchronization problems, and we like to avoid discussions in this direction here.}
\item An exception is the cell in the right lower corner: here the open walls indicate that this is a ``way out.'' More formally, there is one more state to be reached this way
(by moving either south or east), a sink state $\sigma$, which must be synchronizing. Hence, $q{\cdot}eeeeee=\sigma$.
\item Notice that apart from the exceptions that have been worked out above, the automaton behaves as a commutative automaton, e.g., $q{\cdot}nw=q{\cdot}wn$. This is also partially true for our rules at the walls, e.g., $p{\cdot} se=p{\cdot}es=q{\cdot}n$. However, $p{\cdot}sn=p{\cdot}nn\neq p{\cdot}ns=p$. But without the walls and in particular the ``special walls'' built in the middle of the board, the game might be a bit boring.
\end{itemize}

The goal of the game is as in the general case and can also be played with just two pieces (or tokens) on the board. The two players move both of them at the same time according to one of the letters $a,e,n,w$. While Player 1 will try to move both tokens on the same cell of the board, Player 2 will try to prevent this from happening.

Such a board design is quite compact. The example above denotes a 36-state DFA (35 cells in the grid plus the sink state $\sigma$), whose formal transition function is clearly less intuitive than this board presentation. Also when comparing it to a classical automaton graph representation, the suggested presentation has an edge, mainly because of the implicit transition arcs and implicit arc labels. Additionally, if one tries to draw an automaton graph as small as the board can be drawn without losing readability, the letters that need to be written on the arcs will be hard to decipher.

One can also use such a board to define a single-player game as follows. First, specify three positions $p,q,r$ on the board (an example is shown in Figure~\ref{fig:boards} (1), but in general you can think of special tasks that a player draws from a pile of tasks at the beginning of the game), plus a number~$n$. The question to solve is to find a sequence of movements, i.e., a word~$x$ of length at most~$n$ over the alphabet $\{e,n,s,w\}$, such that for the transition function $\delta$ that can be associated to the board, we find that $\delta(p,x)=\delta(q,x)=r$. With the positions shown in Figure~\ref{fig:boards} (1), it is not possible to find any such word. This is due to the fact that one can show that, assuming that the two pieces stay on the board, the Manhattan distance between the two pieces on the board (ignoring walls) will always be an odd number.  In general, this specific form of a synchronizability question could be interesting as a single-player game, although it is polynomial-time solvable (without the length restriction), mainly because of the following facts: (a) as discussed above, boards can be used as quite compact representations of DFA; (b) although words that synchronize two states are of quadratic size only, this might mean that the player might have to look for a synchronizing word of a length like 100 even for the small board displayed in Figure~\ref{fig:boards}. This makes this question quite challenging for a human player.

A second example is illustrated in Figure \ref{fig:boards} (2). Here, the underlying automaton consists of 20 states, each representing a square, and two input letters, $a$ and $b$ whose actions are defined as follows. If a square contains a token, applying $b$ moves the token to the next square clockwise. If a square with a token has $k$ red arrows, applying $a$ moves the token $k$ steps in the clockwise direction; if no red arrow is present,  $a$ leaves the token at the same square. In the initial setup, two tokens ($p$ and $q$ in Figure \ref{fig:boards} (2)) are placed on the board. In each round, players choose to move both tokens by applying either $a$ or $b$. Player 1 aims to synchronize the tokens (i.e., land them on the same square). Player 2  tries to prevent synchronization. This `racing track' game is, obviously, equivalent to the synchronization game by Lemma~\ref{lemma:alice_strategy}.

Apart from being more similar to board games, this perspective on finite automata is also motivated by robot navigation problems; see \cite{RivSch93}. Conceptually, it is interesting to note that this particular paper talks about homing sequences, a notion quite akin to that of synchronizing words, cf. the exposition of Sandberg~\cite{San2005}.  Of course, one could think of different puzzles played on such a board: the traditional \textsc{Short Synchro} problem would lead to a one-person game, while the synchronization game itself is a two-person game. Further variations (or possibly levels) can be designed by introducing further special ``effects'' (or board cell symbols). For instance, one could introduce cells where the directions are rather interpreted as knight moves, etc.

Let us note that one could also use non-standard board designs, like hexagonal grids, or even graphs themselves. Also note that the first proposed board design takes some ideas from chain code picture languages, another area from which one could develop educational games with tight links to automata theory, see \cite{AbeSes82,Bri2000,CulDub93,CulDub93d,DasHin93,Gut89,Nol94,SudWel85}. An alternative would be to consider turtle-like commands to lead through a maze, interpreted as a finite automaton.

Finally, we would like to suggest to use these types of board designs when introducing finite automata in first year's Computer Science courses or also in schools, making this material more accessible and attractive. Then, one could also explain one of the major skills computer scientists should acquire: that of formally modelling environments. For instance, coming back to a horse track example, akin to a children's dice game: input letters could then be $\{1,\dots,6\}$, and two competing horses could be modeled by a product automaton construction (which would also nicely illustrate this formalism); one could then discuss how to introduce or model notions like suspending etc. In the context of this paper, it is interesting to finally note that the concept of an `environment' modeled as a finite automaton is quite common in the literature of Artificial Intelligence, often not mentioning the concept of finite automata explicitly; we cite only \cite{Nat89}, \cite{Suomalainen:2020} and \cite{ArrighiFO023} where the notion of reset (or synchronizing) words does show up.

\section{Algebraic preliminaries}
\label{sec:prelim}

Our approach to synchronization games is algebraic, as it exploits certain structural properties of transition monoids. These are collected in this section.

\subsection{Transition monoids and synchronization}
\label{subsec:transmonoid}

Let $\mathrsfs{A}=(Q,\Sigma)$ be a DFA. For each letter $a\in\Sigma$, the map $\tau_a\colon Q\to Q$ defined by the rule $q\mapsto q{\cdot}a$ is a transformation on the set $Q$.
\begin{definition}
\label{def:transmonoid}
The \emph{transition monoid} of a DFA $\mA=(Q,\Sigma)$ is the submonoid of the monoid of all transformations on the set $Q$ generated by the set $\{\tau_a\mid a\in\Sigma\}$.
\end{definition}

We denote the transition monoid of a DFA $\mA=(Q,\Sigma)$ by $T(\mA)$. It is easy to see that any product $\tau_{a_1}\tau_{a_2}\cdots\tau_{a_n}$ with $a_1,a_2,\dots,a_n\in\Sigma$ is precisely the transformation $\tau_w$ defined by the rule $q\mapsto q{\cdot}w$ where $w$ stands for the word $a_1a_2\cdots a_n$. Thus, the transition monoid $T(\mA)$ can alternatively be defined as the monoid of all transformations on the set $Q$ caused by the action of words over $\Sigma$, provided one adopts the standard convention that the identity transformation is caused by the empty word.

If $\mathrsfs{A}=(Q,\Sigma)$ is a synchronizing DFA and $w$ is a reset word for $\mA$, then the transformation $\tau_w$ is a constant map on $Q$, that is, $Q\tau_w=\{q\}$ for a certain $q\in Q$. Thus, the transition monoid of a \san{} always contains a constant transformation. Conversely, if $\zeta\in T(\mA)$ is a constant transformation, then any word $w$ with $\tau_w=\zeta$ is a reset word for $\mA$ and so $\mA$ is synchronizing. We see that synchronization is actually a property of the transition monoid of an automaton rather than the automaton itself: for DFAs $\mathrsfs{A}=(Q,\Sigma)$ and $\mathrsfs{A}'=(Q,\Sigma')$ with the same state set but different input alphabets, the equality $T(\mA)=T(\mA')$ guarantees that $\mA'$ is synchronizing if and only if so is $\mA$.

We can say a bit more about transition monoids of \sa{}, but for this, we first need to recall some concepts of semigroup theory. A non\-empty subset~$I$ of a semigroup $S$ is called an \emph{ideal} in~$S$ if, for all $s\in S$ and $i\in I$, both products $si$ and $is$ lie in $I$. If $I$ and $J$ are two ideals in $S$, their intersection $I\cap J$ contains any product $ij$ with $i\in I$, $j\in J$. So $I\cap J$ is nonempty, and then it is easy to verify that  $I\cap J$ is again an ideal in $S$. Therefore, each finite semigroup $S$ has a least ideal being the intersection of all ideals of $S$. This least ideal is called the \emph{kernel} of $S$ and denoted by $\Ker S$.

The following observation is folklore but we provide a proof for completeness.
\begin{lemma}
\label{lem:kernel}
A DFA is synchronizing if and only if the kernel of its transition monoid consists of constant transformations.
\end{lemma}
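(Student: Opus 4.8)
The plan is to establish the two implications separately, using the characterization from the preceding discussion that a DFA $\mA$ is synchronizing if and only if $T(\mA)$ contains at least one constant transformation. The key structural fact I would exploit is that the kernel $\Ker T(\mA)$, being the least ideal of a finite monoid, is precisely the set of transformations of minimal rank (minimal image size) in $T(\mA)$, and that this minimal rank equals $1$ exactly when constant transformations exist.

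For the direction assuming $\mA$ is synchronizing, I would first note that $\mA$ possesses a constant transformation $\zeta\in T(\mA)$. The plan is to show every element of $\Ker T(\mA)$ is constant. Since the kernel is an ideal, for any $\eta\in\Ker T(\mA)$ the product $\eta\zeta$ lies in the kernel; but $\eta\zeta$ is constant because post-composing with a constant map always yields a constant map. This shows $\Ker T(\mA)$ contains a constant transformation, hence the image of any kernel element has size $1$: indeed, a constant map has rank $1$, which is the minimum possible rank among all nonempty transformations, and since the kernel is a single $\gD$-class all its elements share this same rank. Therefore every element of the kernel is constant.

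For the converse, I would assume the kernel consists of constant transformations and simply observe that the kernel is nonempty (every finite semigroup has a least ideal). Picking any $\zeta\in\Ker T(\mA)$, it is a constant transformation lying in $T(\mA)$, so by the criterion recalled before the lemma, any word $w$ with $\tau_w=\zeta$ is a reset word, and $\mA$ is synchronizing. This direction is essentially immediate once nonemptiness of the kernel is invoked.

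The main obstacle, and the point deserving the most care, is the claim that all elements of the kernel have the same rank, equivalently that they are all constant once one of them is. The clean way to see this is that elements of $T(\mA)$ of minimal rank form an ideal: if $\mu$ has minimal rank then $\tau\mu$ and $\mu\tau$ cannot have larger rank (composition never increases rank) nor smaller rank (by minimality), so they retain the minimal rank. Hence the set of minimal-rank transformations is an ideal contained in every ideal, so it equals $\Ker T(\mA)$. When a constant transformation is present the minimal rank is $1$, forcing every kernel element to be constant. I would present this rank-minimality argument explicitly, as it is the conceptual heart of the proof and the step where a reader might otherwise stumble.
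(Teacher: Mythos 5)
Your overall route reaches the right conclusion and the `if' direction is identical to the paper's, but your `only if' argument differs from the paper's and contains one misdirected step. The paper argues more directly and more elementarily: it shows that the set $C$ of all constant transformations satisfies $\tau\zeta=\zeta$ for every $\tau\in T(\mA)$ and every $\zeta\in C$, so that $C$ is contained in \emph{every} ideal (in particular in $\Ker T(\mA)$), and that each $\zeta\tau$ is again constant, so that $C$ is itself an ideal; the two facts together give $C=\Ker T(\mA)$ without any mention of ranks or of the internal structure of the kernel. Your version instead plants one constant inside the kernel (via $\eta\zeta$, which is fine) and then propagates constancy to the whole kernel by a rank argument.

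That propagation does work, but not for the reason you give. From ``the minimal-rank elements form an ideal $M$'' the correct deduction is $\Ker T(\mA)\subseteq M$, because the kernel, being the \emph{least} ideal, is contained in every ideal --- and this inclusion is all you need, since the presence of a constant in $T(\mA)$ forces the minimal rank to be $1$, so that $M$ is exactly the set of constants and every kernel element is constant. What you wrote, however, is that $M$ is ``an ideal contained in every ideal, so it equals $\Ker T(\mA)$.'' Being an ideal does not make $M$ contained in every ideal; that would be the reverse inclusion $M\subseteq\Ker T(\mA)$, which is true for finite transformation monoids but requires a genuine argument (for instance, that every minimal-rank transformation generates a cyclic group whose identity already lies in the kernel) that you have not supplied --- and which the lemma does not need. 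The same caution applies to your alternative appeal to ``the kernel is a single $\gD$-class'': this is a standard structural fact about minimal ideals of finite semigroups, but it is external to the paper and unproved here. If you keep only the easy inclusion $\Ker T(\mA)\subseteq M$ and drop the claims of equality and of containment in every ideal, your proof is correct; as written, the justification points in the wrong direction at the crucial step.
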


\begin{proof}
The `if' part readily follows from the already mentioned fact that if the transition monoid of a DFA contains a constant transformation, then the DFA is synchronizing.

For the `only if' part, let $\mA$ be a \sDFA; then the set $C$ of all constant transformations in the transition monoid $T(\mA)$ is nonempty. For any $\tau\in T(\mA)$ and any $\zeta\in C$, we have
\begin{equation}\label{eq:rightzero}
\tau\zeta=\zeta.
\end{equation}
Equality~\eqref{eq:rightzero} implies that the set $C$ is contained in every ideal of the monoid $T(\mA)$, in particular, in its kernel $\Ker T(\mA)$. Further, for every $\tau\in T(\mA)$ and every $\zeta\in C$, the product $\zeta\tau$ is a constant transformation: if $Q\zeta=\{q\}$, then $Q\zeta\tau=\{q\tau\}$. Together with \eqref{eq:rightzero}, this observation implies that $C$  forms an ideal in $T(\mA)$. Since $\Ker T(\mA)$ contains $C$, we have $\Ker T(\mA)=C$ by the definition of the kernel.
\end{proof}

\subsection{Structure of monoids in\/ \textup{\textbf{DS}}}

Green \cite{Gre51} defined five important relations on every semigroup $S$, which are collectively referred to as \emph{Green's relations}. Of these, we shall need the following three:
\begin{itemize}
\item[] $a\gR b \Longleftrightarrow {}$ either $a=b$ or $a=bs$ and $b=at$ for some $s,t\in S$;
\item[] $a\gL \,b \Longleftrightarrow {}$ either $a=b$ or $a=sb$ and $b=ta$ for some $s,t\in S$;
\item[] $a\gD b \Longleftrightarrow {}$ $a\gR c$ and $c\gL b$ for some $c\in S$.
\end{itemize}
The relations $\gR$ and $\gL$ are obviously equivalencies. The definition of ${\gD}$ means that ${\gD}$ is the product of $\gR$ and $\gL$ as binary relations. As observed in \cite{Gre51}, ${\gD}$ is also the product of $\gL$ and $\gR$, whence $\gD$ is the least equivalence containing both $\gR$ and $\gL$.

An element $a$ of a semigroup $S$ is said to be \emph{regular} if $asa=a$ for some $s\in S$. A $\gD$-class $D$ is called \emph{regular} if it contains a regular element. (In this case, every element of $D$ is known to be regular; see \cite[Theorem 6]{Gre51}.) We denote by $\mathbf{DS}$ the set of all finite monoids $S$ such that each regular $\gD$-class of $S$ is a subsemigroup in $S$.

The structure of monoids in $\mathbf{DS}$ is well understood in terms of their decomposition into some basic blocks. As we use this structural result, we recall the notions involved.

A \emph{semilattice} is a semigroup that satisfies the laws of commutativity $xy=yx$ and idempotency $x^2=x$ for all elements $x,y$.

\begin{definition}
\label{def:semilattice}
Let $Y$ be a semilattice and $\{S_y\}_{y\in Y}$ a family of disjoint semigroups indexed by the elements of $Y$. A semigroup $S$ is said to be a \emph{semilattice of semigroups} $S_y$, with $y\in Y$, if:
\begin{description}
  \item[{\normalfont(S1)}] $S=\bigcup_{y\in Y}S_y$;
  \item[{\normalfont(S2)}] each $S_y$ is a subsemigroup in $S$;
  \item[{\normalfont(S3)}] for every $y,z\in Y$ and every $s\in S_y$, $t\in S_z$, the product $st$ belongs to $S_{yz}$.
\end{description}
\end{definition}

A semigroup $S$ is $m$-\emph{nilpotent over its kernel} ($m$ being a positive integer) if every product of $m$ elements of $S$ belongs to $\Ker S$. We call a semigroup \emph{nilpotent over its kernel} if it is $m$-nilpotent over its kernel for some $m$. (To a semigroupist, finite semigroups nilpotent over their kernels are familiar as finite \emph{Archimedean} semigroups.)

The following is a specialization of the equivalence (4c) $\Leftrightarrow$ (1b) in \cite[Theorem 3]{She95a} to finite monoids\footnote{Theorem 3 in \cite{She95a} deals with semigroups in which every element has a power that belongs to a subgroup. Every finite semigroup has this property.}.
\begin{lemma}
\label{lem:ds}
Every monoid in\/ $\mathbf{DS}$ is a semilattice of semigroups that are nilpotent over their kernels.
\end{lemma}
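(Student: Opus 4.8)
The plan is to construct the required decomposition explicitly from the idempotent-power structure of $S$ and to isolate the precise role of the $\mathbf{DS}$ hypothesis. Since $S$ is finite, every element $a$ has a unique idempotent power, which I denote $a^{\omega}$; recall also that in a finite monoid $\gD$ coincides with the $\mathcal{J}$-relation, so the regular $\gD$-classes are exactly the $\gD$-classes of idempotents, and they carry the partial order $D\le_{\mathcal{J}}D'\iff D\subseteq S^{1}D'S^{1}$. I would take the candidate semilattice $Y$ to be the set of regular $\gD$-classes and define a projection $\phi\colon S\to Y$ by $\phi(a):=D_{a^{\omega}}$. The components of the decomposition will then be the fibers $S_{y}:=\phi^{-1}(y)$, and the entire argument reduces to proving that $\phi$ is a surjective homomorphism onto a semilattice whose fibers are nilpotent over their kernels.

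Next I would equip $Y$ with a binary operation: for regular classes $D,D'$, pick idempotents $e\in D$ and $f\in D'$ and set $D\wedge D':=D_{(ef)^{\omega}}$, which lands again in a regular class because $(ef)^{\omega}$ is idempotent. The work is to show that $\wedge$ is well defined (independent of the chosen idempotents), commutative, idempotent, and associative, so that $(Y,\wedge)$ is a genuine semilattice. This is exactly where $\mathbf{DS}$ is indispensable: because each regular $\gD$-class is a subsemigroup, products of idempotents drawn from a fixed class stay inside that class, which lets me pin down the $\gD$-class of $(ef)^{\omega}$ and show it depends only on $D$ and $D'$. Here I would also invoke the standard facts that $ef$ and $fe$ have $\mathcal{J}$-equivalent idempotent powers and that $(ef)^{\omega}\le_{\mathcal{J}}D,D'$, so that $D\wedge D'$ is in fact the meet of $D$ and $D'$ in the $\mathcal{J}$-order.

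I would then verify that $\phi$ respects multiplication, that is, that $(ab)^{\omega}$ and $(a^{\omega}b^{\omega})^{\omega}$ lie in the same regular $\gD$-class, again leaning on the subsemigroup property. Granting this, surjectivity of $\phi$ is immediate since $\phi(e)=D_{e}$ for every idempotent $e$, and the fibers $S_{y}$ at once satisfy the axioms of Definition~\ref{def:semilattice}: they partition $S$, giving (S1); each is a subsemigroup since $Y$ is idempotent, giving (S2); and $\phi(st)=\phi(s)\wedge\phi(t)$ forces $st\in S_{yz}$ whenever $s\in S_{y}$, $t\in S_{z}$, giving (S3). It remains to show that each $S_{y}$ is nilpotent over its kernel. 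The key observation is that every idempotent $g\in S_{y}$ satisfies $g=g^{\omega}$, hence $\phi(g)=D_{g}=y$, so all idempotents of $S_{y}$ lie in the single regular $\gD$-class $y$; and since $a^{\omega}\le_{\mathcal{J}}a$ for every $a\in S_{y}$, this class is the $\mathcal{J}$-minimum of $S_{y}$ and therefore equals $\Ker S_{y}$. Consequently the Rees quotient $S_{y}/\Ker S_{y}$ is a finite semigroup with zero and no nonzero idempotent, hence a finite nilsemigroup, hence nilpotent; this produces an $m$ with $(S_{y})^{m}\subseteq\Ker S_{y}$, which is precisely nilpotency over the kernel.

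I expect the main obstacle to be the middle step: proving that $\wedge$ is well defined on regular $\gD$-classes and that $\phi$ preserves products. Both hinge on extracting, from the sole axiom that regular $\gD$-classes are subsemigroups, enough control over the idempotent powers of products $ef$ with $e,f$ ranging over prescribed classes. Once the homomorphism $\phi\colon S\to Y$ is in hand, the decomposition axioms are formal and the nilpotency of the components is a routine finiteness argument via the nilsemigroup quotient, so essentially all of the genuine difficulty is concentrated in establishing the semilattice structure on the regular $\gD$-classes.
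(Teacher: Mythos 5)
The paper does not prove this lemma at all: it is obtained by citing the equivalence (4c) $\Leftrightarrow$ (1b) of Shevrin's Theorem~3 in \cite{She95a}, specialized to finite monoids. Your proposal is thus an attempt at a self-contained proof, and its overall architecture is the standard one behind that theorem: map each $a$ to the regular $\gD$-class of $a^{\omega}$, show this is a homomorphism onto a semilattice, and finish by observing that a finite semigroup all of whose idempotents lie in a completely simple ideal is nilpotent over its kernel. Your endgame is sound: the Rees quotient $S_y/\Ker S_y$ is a finite semigroup in which every element has idempotent power equal to zero, hence is nilpotent, which yields the required $m$.

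The genuine gap is precisely the step you yourself flag as ``the main obstacle'' and then do not carry out: well-definedness, commutativity and associativity of $D\wedge D':=D_{(ef)^{\omega}}$, together with the identity $D_{(ab)^{\omega}}=D_{a^{\omega}}\wedge D_{b^{\omega}}$. The justification you offer---that products of idempotents drawn from a fixed regular class stay in that class---only covers the case $D=D'$ (giving idempotency of $\wedge$); it says nothing about a product $ef$ with $e\in D$, $f\in D'$, $D\neq D'$, which is exactly what must be controlled. One has to prove, for instance, that $(ef)^{\omega}\mathrel{\mathcal{J}}(e'f)^{\omega}$ whenever $e\mathrel{\mathcal{J}}e'$ are idempotents, and that $(ab)^{\omega}\mathrel{\mathcal{J}}(a^{\omega}b^{\omega})^{\omega}$ for arbitrary $a,b$; these statements are false in general finite monoids (consider $B_2^1$), so they cannot be ``formal'' consequences of finiteness and must be extracted from the hypothesis that regular $\gD$-classes are subsemigroups. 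That extraction is where the entire content of the lemma is concentrated, and the proposal contains no argument for it. A secondary, fixable imprecision: your claim that $y$ is ``the $\mathcal{J}$-minimum of $S_y$ and therefore equals $\Ker S_y$'' does not follow from $\mathcal{J}$-minimality in $S$ alone; you need the (still unproved) homomorphism property to see that $y$ is an ideal of $S_y$, and the fact that a regular $\gD$-class which is a subsemigroup is completely simple to see that it is the least one. So the plan is viable and matches the literature, but as written the proof is missing its central step.
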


\subsection{A characterization of\/  \textup{\textbf{DS}} as a pseudovariety}

A monoid \emph{pseudovariety} is a set of finite monoids closed under taking homomorphic images, submonoids and finite direct products. Monoid pseudovarieties are actively studied because of their tight connections to certain classes of recognizable languages via the Eilenberg correspondence (\cite{Eilenberg1976}; see also~\cite{Pin1984}). The set $\mathbf{DS}$ is a pseudovariety; it and several of its subpseudovarieties play an important role in  studies on the Eilenberg correspondence; see \cite[Chapter 8]{Alm95}.

The 6-element \emph{Brandt monoid} $B_2^1$ consists of the following six $2\times 2$-matrices, multiplied according to the usual matrix multiplication rule:
\[
\begin{pmatrix}
1 & 0 \\ 0 & 1
\end{pmatrix}, ~~
\begin{pmatrix}
1 & 0 \\ 0 & 0
\end{pmatrix}, ~~
\begin{pmatrix}
0 & 1 \\ 0 & 0
\end{pmatrix}, ~~
\begin{pmatrix}
0 & 0 \\ 1 & 0
\end{pmatrix}, ~~
\begin{pmatrix}
0 & 0 \\ 0 & 1
\end{pmatrix}, ~~
\begin{pmatrix}
0 & 0 \\ 0 & 0
\end{pmatrix}.
\]
It is known (and easy to verify) that the monoid $B_2^1$ is isomorphic to the transition monoid of the automaton $\mB_2$ shown in Figure~\ref{fig:autB21}.
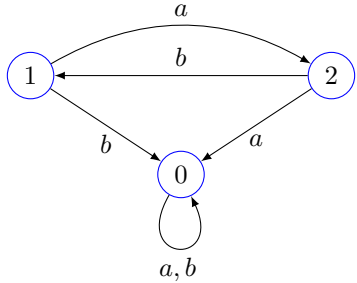
\begin{figure}[htb]
\begin{center}
\begin{tikzpicture}
	\node[fill=white, circle, draw=blue, scale=1] (0) {$0$};
	\node[fill=white, circle, draw=blue, scale=1, above = 0, xshift= -2cm, yshift= 1cm] (1) {$1$};
    \node[fill=white, circle, draw=blue, scale=1, above = 0, xshift= 2cm, yshift= 1cm] (2) {$2$};
	\draw
		(1) edge[-latex, below]  node{$b$} (0)
        (1) edge[-latex, bend left, above]  node{$a$} (2)
		(2) edge[-latex, above] node{$b$}(1)
		(2) edge[-latex, below] node{$a$} (0)
		(0) edge[-latex, loop below, below, in =-65, out = -120, distance = 30] node{$a,b$} (0);
		\end{tikzpicture}
\end{center}
\caption{The automaton $\mB_2$ with the transition monoid isomorphic to $B_2^1$}
\label{fig:autB21}
\end{figure}

We need the following characterization of $\mathbf{DS}$.

\begin{lemma}[{\mdseries\cite[Theorem 3]{Mar81}}]\label{lem:exclds}
$\mathbf{DS}$ is the largest monoid pseudovariety that does not contain the monoid $B_2^1$.
\end{lemma}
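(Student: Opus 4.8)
The statement to prove is Lemma~\ref{lem:exclds}: that $\mathbf{DS}$ is the largest monoid pseudovariety not containing $B_2^1$.

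\medskip

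The plan is to establish two directions. First I would verify that $B_2^1 \notin \mathbf{DS}$, so that $\mathbf{DS}$ is indeed \emph{some} pseudovariety avoiding $B_2^1$. For this I would examine the Green structure of $B_2^1$ directly from the matrix presentation. The four matrices with a single $1$ off or on the diagonal together with the zero matrix form the ideal $B_2$, and among these the two rank-one idempotents $\left(\begin{smallmatrix}1&0\\0&0\end{smallmatrix}\right)$ and $\left(\begin{smallmatrix}0&0\\0&1\end{smallmatrix}\right)$ are regular and lie in a common $\gD$-class $D$ (the nonzero elements of $B_2$). To show $D$ is not a subsemigroup, I would exhibit two elements of $D$ whose product falls outside $D$: for instance $\left(\begin{smallmatrix}0&1\\0&0\end{smallmatrix}\right)\left(\begin{smallmatrix}0&1\\0&0\end{smallmatrix}\right) = \left(\begin{smallmatrix}0&0\\0&0\end{smallmatrix}\right)$, the zero, which sits in a strictly lower $\gD$-class. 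Hence a regular $\gD$-class fails to be closed under multiplication, so $B_2^1 \notin \mathbf{DS}$.

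\medskip

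The harder and more substantive direction is \emph{maximality}: every monoid pseudovariety $\mathbf{V}$ that does not contain $B_2^1$ must be contained in $\mathbf{DS}$. Equivalently, I want to show that any finite monoid $M \notin \mathbf{DS}$ has $B_2^1$ as a divisor (a homomorphic image of a submonoid), for then $B_2^1 \in \mathbf{V}$ whenever $M \in \mathbf{V}$, contradicting $B_2^1 \notin \mathbf{V}$. So the crux is the implication
\[
M \notin \mathbf{DS} \implies B_2^1 \text{ divides } M.
\]
Starting from $M \notin \mathbf{DS}$, there is a regular $\gD$-class $D$ and elements $x,y \in D$ with $xy \notin D$. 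Using the standard theory of regular $\gD$-classes (Green's lemma and the egg-box structure of $D$, with its $\gR$-classes, $\gL$-classes, and idempotents), I would locate idempotents inside $D$ and use the failure of closure to produce the characteristic $B_2^1$-pattern. Concretely, I expect to find two $\gL$-classes and two $\gR$-classes of $D$ whose intersection cells contain elements multiplying so that one product stays in $D$ while a ``crossed'' product drops below $D$; restricting to a suitable submonoid generated by these elements together with an idempotent and collapsing the lower ideal to a zero should yield a homomorphism onto $B_2^1$. This is essentially the hardest step, as it requires carefully managing the $\mathfrak{R}$- and $\mathfrak{L}$-coordinates within $D$.

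\medskip

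The main obstacle, and the reason this direction is delicate, is that building the divisor $B_2^1$ explicitly from a bare non-closure witness $xy \notin D$ requires nontrivial semigroup-theoretic machinery: one must pass from the single pair $(x,y)$ to a full square of four nonzero $\gH$-cells arranged as in $B_2$, which typically uses the Rees matrix coordinatization of $D$ and an argument that the defect forces a $2\times 2$ non-degenerate sandwich pattern. Rather than reconstruct this, I would invoke the cited characterization: the equivalence is exactly \cite[Theorem 3]{Mar81}, so the cleanest route is to attribute the maximality direction to that reference and present only the short verification that $B_2^1 \notin \mathbf{DS}$, noting that maximality then follows from the excluded-divisor theory (division of pseudovarieties and the fact that $B_2^1$ divides every non-$\mathbf{DS}$ monoid). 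This keeps the proof self-contained on the easy side while deferring the combinatorial heart to the established literature.
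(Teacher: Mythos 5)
The paper itself gives no proof of this lemma: it is imported verbatim from \cite[Theorem 3]{Mar81}. So your final fallback---verify that $B_2^1\notin\mathbf{DS}$ and cite Margolis for maximality---matches what the authors do, and your easy direction is correct: the four nonzero non-identity matrices form a single regular $\gD$-class of $B_2^1$, and $\left(\begin{smallmatrix}0&1\\0&0\end{smallmatrix}\right)^2=\left(\begin{smallmatrix}0&0\\0&0\end{smallmatrix}\right)$ shows it is not a subsemigroup.

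However, the reduction you sketch for maximality contains a genuine error that would sink the argument if you tried to carry it out. The implication ``$M\notin\mathbf{DS}$ implies $B_2^1$ divides $M$'' is \emph{false}, and accordingly your reduction, which invokes only closure under division, cannot work. In the Rees coordinatization of the offending regular $\gD$-class, regularity plus non-closure only guarantees a $2\times2$ submatrix of the sandwich matrix of the shape $\left(\begin{smallmatrix}*&0\\ *&*\end{smallmatrix}\right)$; the ``non-degenerate $2\times2$ pattern'' (zeros on one diagonal, units on the other) that actually yields $B_2$ need not occur inside $M$ itself. A concrete witness: let $S$ be the combinatorial $2\times2$ Rees matrix semigroup with zero over the sandwich matrix $\left(\begin{smallmatrix}1&0\\1&1\end{smallmatrix}\right)$. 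The six-element monoid $S^1$ is not in $\mathbf{DS}$ (its unique regular nonzero $\gD$-class has a product falling to $0$), yet $B_2$ divides neither $S$ nor $S^1$: every subsemigroup of $S^1$ with at least five elements is either a monoid (and $B_2$, having no identity, is not a quotient of a monoid) or is $S$ itself, whose only five-element quotient is $S$, which is not isomorphic to $B_2$ (three nonzero idempotents versus two). What is true---and what the proof in the literature establishes---is that $B_2^1$ divides a finite \emph{direct power} of $M$; for the example above one finds $B_2$ inside a Rees quotient of a two-generated subsemigroup of $S\times S$ by swapping coordinates to manufacture the diagonal pattern. This is precisely why closure of $\mathbf{V}$ under finite direct products, and not merely under division, is indispensable to the maximality direction; any self-contained proof you attempt must route through $M\times M$ (or a larger power) at this point.
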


\section{Uniform winning strategy in synchronization games on DS-automata}
\label{sec:main}

\subsection{Uniform winning strategies: definition and examples}
\label{subsec:uniform}

A \emph{uniform winning strategy} for a DFA $\mA$ is a word $w$ over the input alphabet of $\mA$ with the property that in the synchronization game on $\mA$, if Alice plays the letters of $w$ in order (using the $i$-th letter of $w$ as her $i$-th move), then she wins the game, regardless of Bob's responses. We denote the class of \sa{} with uniform winning strategies by \UWS.

Synchronizing automata of several previously studied types lie in \UWS. As an example, consider definite automata. This DFA family was introduced by some of the pioneers of automata theory back in 1963~\cite{PerRabSha63}. In~\cite{PerRabSha63}, the term `automaton' meant a recognizer, that is, a DFA with a designated initial state and a distinguished set of final states. However, DFAs without initial and final states as defined in the present paper also appeared in~\cite{PerRabSha63} but under the name `transition tables'. The following is \cite[Definition 13]{PerRabSha63} stated in our terminology and notation.
\begin{definition}
\label{def:definite}
A DFA $(Q,\Sigma)$ is \emph{weakly $k$-definite} if $q{\cdot}w=q'{\cdot}w$ for every word $w$ of length at least~$k$ over $\Sigma$ and all $q,q'\in Q$. A DFA is $k$-\emph{definite} if it is weakly $k$-definite but not weakly $(k-1)$-definite and is \emph{definite} if it is $k$-definite for some $k$.
\end{definition}

By Definition~\ref{def:definite}, definite DFAs are synchronizing and for each $k$-definite DFA $\mA=(Q,\Sigma)$, every word of length at least $k$ over $\Sigma$ is a reset word.  Therefore, every word of length at least $k/2$ over $\Sigma$ is a uniform winning strategy for $\mA$.

A DFA is called \emph{commutative} if its transition monoid is commutative, that is, it satisfies the law $xy=yx$. Synchronizing commutative automata were considered in \cite{Rys96,Rys97,Hof2021m}. A simple winning strategy for Alice in synchronization games on such automata was suggested in~\cite[Theorem 5.2]{FerHaaHof2022}: Alice must just choose letters spelling out a reset word, ignoring the moves of Bob. Thus, any reset word for a commutative \san{} $\mA$ is a uniform winning strategy for $\mA$.

The automaton $\mB_2$ shown in Figure~\ref{fig:autB21} yields another example. One can verify that each of the words $ab$ and $ba$ is a uniform winning strategy for $\mB_2$.

\subsection{Uniform winning strategy within\/ \textup{\textbf{DS}}}

Recall that a DS-automaton is a DFA with transition monoid in the pseudovariety $\mathbf{DS}$. The following strengthening of \cite[Theorem 3]{FerHHV2024} is the main result of this paper.

\begin{theorem}
\label{thm:main}
Every synchronizing DS-automaton admits a uniform winning strategy.
\end{theorem}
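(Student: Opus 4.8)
The plan is to exhibit one explicit word and verify it works against every Bob. I first reformulate the game algebraically. After any sequence of moves spelling a word $u$, the set of occupied states is exactly $Q\tau_u$, so Alice wins precisely when the accumulated word induces a constant transformation. Since appending a letter to a constant transformation keeps it constant, the winning condition is monotone; hence a word $w=a_1\cdots a_\ell$ (played by Alice as her successive moves) is a uniform winning strategy if and only if, for every choice $b_1,\dots,b_{\ell-1}\in\Sigma$ of Bob's replies, the transformation $\tau_{a_1b_1a_2\cdots b_{\ell-1}a_\ell}$ is constant.

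Next I would exploit the structure of $T(\mA)\in\mathbf{DS}$. By Lemma~\ref{lem:ds}, $T(\mA)=\bigcup_{y\in Y}S_y$ is a semilattice $Y$ of semigroups $S_y$, each nilpotent over its kernel; discarding empty components I may assume that $Y$ is generated by the components $y_a$ of the generators $\tau_a$. Writing $y_0=\prod_{a\in\Sigma}y_a$ for the bottom element of $Y$, property~(S3) shows that the $Y$-component of a product of generators is the meet of their components. In particular $S_{y_0}$ is an ideal of $T(\mA)$, and any product of generators in which every letter of $\Sigma$ occurs lands in $S_{y_0}$. I then record the one algebraic fact I really need: because $S_{y_0}$ is an ideal, it contains $\Ker T(\mA)$, and $\Ker T(\mA)$ is itself an ideal of $S_{y_0}$, so minimality gives $\Ker S_{y_0}\subseteq\Ker T(\mA)$; by Lemma~\ref{lem:kernel} the right-hand side consists of constant transformations. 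Consequently, if $m$ is the nilpotency index of $S_{y_0}$ over its kernel, then any product of $m$ elements of $S_{y_0}$ is a constant transformation.

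The strategy then essentially writes itself. Let $\sigma$ be any word containing every letter of $\Sigma$ at least once, and set $w=\sigma^{m}$. Given any sequence of Bob's replies, I would group the word accumulated after Alice's final move into $m$ consecutive blocks, the $j$-th block comprising Alice's $j$-th copy of $\sigma$ together with the replies Bob interleaves inside it. Each block contains every letter of $\Sigma$, so by the previous paragraph its transformation lies in $S_{y_0}$; the full accumulated transformation is therefore a product of $m$ elements of $S_{y_0}$, hence constant. Thus Alice has synchronized after at most $m|\Sigma|$ of her own moves, independently of Bob, which shows $w$ is a uniform winning strategy.

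The only points requiring care are the block bookkeeping at the boundary between Alice's and Bob's turns (the last block omits Bob's final, never-played reply) and the two small semigroup facts, namely the grading of products by $Y$ and the inclusion $\Ker S_{y_0}\subseteq\Ker T(\mA)$. The main conceptual obstacle is the one resolved above: the nilpotency index must be applied to genuine elements of the \emph{single} bottom component $S_{y_0}$, not to arbitrary generators. This is precisely why each block must spell out all of $\Sigma$ (to force its transformation into $S_{y_0}$) and why it is the bottom component alone—the one that is simultaneously an ideal and nilpotent over a kernel consisting of constants—that makes the argument go through.
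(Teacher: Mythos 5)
Your proof is correct and follows essentially the same route as the paper's: the semilattice decomposition of Lemma~\ref{lem:ds}, the fact that the bottom component is an ideal of $T(\mA)$ whose kernel is contained in $\Ker T(\mA)$ (hence consists of constants by Lemma~\ref{lem:kernel}), and the repetition $m$ times of a word forced into that bottom component, so that every interleaving with Bob's replies factors as a product of $m$ elements there. The only (immaterial) difference is that you enter the bottom component by taking a word containing every letter of $\Sigma$, whereas the paper fixes an arbitrary word representing an element of $S_z$.
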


\begin{proof}
Take an arbitrary synchronizing DS-automaton $\mA=(Q,\Sigma)$. We denote the transition monoid $T(\mA)$ by $S$ to lighten the notation. Since $S\in\mathbf{DS}$, by Lemma~\ref{lem:ds} there is a semilattice $Y$ such that $S$ is a semilattice of semigroups $S_y$, $y\in Y$, where each semigroup $S_y$ is nilpotent over its kernel.

On every semilattice, and thus on $Y$, the relation $\le$ defined by
\[
x\le y\Longleftrightarrow  xy=x
\]
is known (and easy to see) to be a partial order under which, for any $y_1,\dots,y_k\in Y$, the product $y_1\cdots y_k$ is the greatest lower bound of the set $\{y_1,\dots,y_k\}$. Since the semilattice $Y$ is finite, the product $z$ of all its elements is the least element with respect to $\le$. We have $yz=zy=z$ for every $y\in Y$; hence by item (S3) in Definition~\ref{def:semilattice}, the semigroup $S_z$ is an ideal in $S$. Therefore $S_z$ contains the kernel $\Ker S$ of $S$, and thus $\Ker S_z\subseteq\Ker S$. (In fact, it is easy to show that $\Ker S_z=\Ker S$, but this is not needed for the present proof.)

Fix a positive integer $m$ such that the semigroup $S_z$ is $m$-nilpotent over its kernel $\Ker S_z$ and fix an element $\zeta\in S_z$. Since the monoid $S$ consists of transformations caused by the action of words over $\Sigma$, there is a word $w$ such that $\zeta=\tau_w$. We claim that the word $w^m$ is a uniform winning strategy for $\mA$.

Recall that $S=\bigcup_{y\in Y}S_y$ by item (S1) in Definition~\ref{def:semilattice}, and that the semigroups $S_y$ are disjoint. Therefore, for each letter $a\in\Sigma$, there is a unique element $y(a)\in Y$ such that the transformation $\tau_a$ lies in the subsemigroup $S_{y(a)}$. Write the word $w$ as a product of letters: $w=a_1a_2\cdots a_n$. Suppose that Alice spells out the word $w$ on her moves and that Bob responds to Alice's $i$-th move, choosing a letter $b_i\in\Sigma$. Let
\[
u:=a_1b_1a_2b_2\cdots a_nb_n,
\]
and let $x\in Y$ be such that $\tau_u\in S_x$. We have
\begin{align*}
  x&=y(a_1)y(b_1)y(a_2)y(b_2)\cdots y(a_n)y(b_n) &&\text{by item (S3) in Definition~\ref{def:semilattice}} \\
   &\le y(a_1)y(a_2)\cdots y(a_n) &&\text{by the definition of $\le$}\\
   &=z && \text{since $\tau_{a_1}\tau_{a_2}\cdots\tau_{a_n}=\tau_w=\zeta\in S_z$}.
\end{align*}
Since $z$ is the least element with respect to $\le$, we conclude that $x=z$. Hence, the transformation $\tau_u$ belongs to $S_z$.

Now let Alice spell out the word $w^m$, and let $v$ be the word obtained from $w^m$ by inserting the letters chosen by Bob after each of Alice's moves. Clearly, $v$ can be decomposed as
\[
v=u_1\cdots u_m
\]
where $u_j$ is the word obtained from the $j$-th copy of $w$ after inserting Bob's responses corresponding to that copy. Applying the argument of the preceding paragraph to each $u_j$ shows that, for every $j=1,\dots,m$, the transformation $\tau_{u_j}$ belongs to $S_z$. Then the transformation $\tau_v$ is a product of $m$ elements of $S_z$ and so $\tau_w$ belongs to $\Ker S_z$ as the semigroup $S_z$ is $m$-nilpotent over its kernel. Since $\mA$ is a \san, the kernel $\Ker S$ of its transition monoid consists of constant transformations by Lemma~\ref{lem:kernel}. From the inclusion $\Ker S_z\subseteq\Ker S$ registered above, we conclude that $\tau_w$ is a constant transformation, and so $v$ is a reset word for $\mA$.

Therefore, by spelling out the word $w^m$, Alice wins regardless of Bob's responses. This establishes our claim that $w^m$ is a uniform winning strategy for $\mA$.
\end{proof}

\subsection{Corollaries}
\label{subsec:corollaries}

In Section~\ref{subsec:uniform},  we mentioned two previously known families of DFAs within \UWS: definite automata and commutative \sa. These families consist of DS-automata so their uniform winning strategies are subsumed by that of Theorem~\ref{thm:main}.

The transition monoid of a definite DFA is nilpotent over its kernel. This fact is implicitly contained in~\cite{PerRabSha63} and in the explicit form, it is a part of~\cite[Theorem~3]{Zal72}. Comparing it with Lemma~\ref{lem:ds}, we see that definite automata constitute a special subfamily of DS-automata.

Obviously, on any commutative semigroup, Green's relations $\gR$ and $\gL$ coincide with each other, and hence, with the relation $\gD$. If an element $a$ of a commutative semigroup $S$ is regular, then $a=a^2s$ for some $s\in S$ whence $a\gR a^2$. By \cite[Theorem 7]{Gre51}, this ensures that the $\gD$-class containing $a$ is a subsemigroup (and even a subgroup) of $S$. Thus, all finite commutative monoids lie in $\mathbf{DS}$, and hence, commutative DFAs are DS-automata.

Another family of DFAs considered in connection with synchronization games is that of weakly acyclic automata. A DFA $(Q,\Sigma)$ is \emph{weakly acyclic} if for any $p,q\in Q$, the existence of words $u,v$ over $\Sigma$ such that $q=p{\cdot}u$ and $p=q{\cdot}v$ implies $p=q$. In terms of the graphical representation of DFAs, weakly acyclic automata are characterized by the property that the only directed cycles in their digraphs are loops.

Various properties of synchronizing weakly acyclic DFAs were considered in \cite{Ryz2019a,Hof2021g}. A winning strategy for Alice in synchronization games on such automata was suggested in~\cite[Theorem 2.3]{FerHaaHof2022}.

By~\cite[Proposition 6.2]{BrzFic80} the transition monoid $T(\mA)$ of any weakly acyclic DFA $\mA$ is $\gR$-\emph{trivial}, meaning that Green's relation $\gR$ on $T(\mA)$ coincides with the equality relation. It is well known that all finite $\gR$-trivial monoids lie in $\mathbf{DS}$; see, e.g., \cite[Chapter 9]{Alm95}. Thus, weakly acyclic DFAs are DS-automata, and Theorem~\ref{thm:main} provides uniform winning strategies for weakly acyclic  \sa{}.

\subsection{No uniform winning strategy beyond\/ \textup{\textbf{DS}}}

Consider the automaton $\mB'_2$ obtained from the DFA $\mB_2$ shown in Figure~\ref{fig:autB21} by adding an extra letter $c$ that acts as the identity transformation; see Figure~\ref{fig:autB2prime}.
\begin{figure}[htb]
\begin{center}
\begin{tikzpicture}
	\node[fill=white, circle, draw=blue, scale=1] (0) {$0$};
	\node[fill=white, circle, draw=blue, scale=1, above = 0, xshift= -2cm, yshift= 1cm] (1) {$1$};
    \node[fill=white, circle, draw=blue, scale=1, above = 0, xshift= 2cm, yshift= 1cm] (2) {$2$};
	\draw
		(1) edge[-latex, below]  node{$b$} (0)
        (1) edge[-latex, bend left, above]  node{$a$} (2)
		(2) edge[-latex, above] node{$b$}(1)
		(2) edge[-latex, below] node{$a$} (0)
		(0) edge[-latex, loop below, below, in =-65, out = -120, distance = 30] node{$a,b,c$} (0)
        (1) edge[-latex, loop left, left, in =155, out = -150, distance = 30] node{$c$} (1)
        (2) edge[-latex, loop right, right, in =-25, out = 30, distance = 30] node{$c$} (2);
		\end{tikzpicture}
\end{center}
\caption{The automaton $\mB'_2$ }
\label{fig:autB2prime}
\end{figure}
Adding the identity transformation as a letter does not change the transition monoid of a DFA, as the monoid already contains the identity element by definition. Hence, the transition monoid $T(\mB'_2)$ is the same as $T(\mB_2)$, and so it is isomorphic to the monoid $B_2^1$.

\begin{lemma}\label{lem:B2prime}
Alice can always win the synchronization game on the automaton $\mB'_2$, but no uniform winning strategy for $\mB'_2$ exists.
\end{lemma}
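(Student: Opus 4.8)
The plan is to treat the two assertions separately. For the first (Alice wins the full game), I would invoke Lemma~\ref{lemma:alice_strategy}: it suffices to show that Alice can synchronize every $2$-element subset of $\{0,1,2\}$. Recording the actions $\tau_a\colon 0\mapsto 0,\,1\mapsto 2,\,2\mapsto 0$, $\tau_b\colon 0\mapsto 0,\,1\mapsto 0,\,2\mapsto 1$, and $\tau_c=\mathrm{id}$, the pairs $\{0,1\}$ and $\{0,2\}$ are merged by Alice in a single move ($b$ and $a$, respectively), so the only case needing care is $\{1,2\}$. Here Alice plays $a$, reaching $\{0,2\}$; whatever Bob answers leaves the two tokens on one of $\{0,2\}$ or $\{0,1\}$ (he loses at once if he plays $a$), and on her next move Alice merges them with $a$ or $b$ as appropriate. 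Thus Alice wins on every pair, and hence, by the lemma, on the whole automaton.

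For the second assertion I would first describe the position graph of the game. Starting from the full set $\{0,1,2\}$, the letter $c$ fixes every position, while $a$ and $b$ send $\{0,1,2\}$ to the $2$-element sets $\{0,2\}$ and $\{0,1\}$; from then on the reachable non-singleton positions are exactly $\{0,1\}$ and $\{0,2\}$, with transitions $\{0,2\}\xrightarrow{a}\{0\}$, $\{0,2\}\xrightarrow{b}\{0,1\}$, $\{0,2\}\xrightarrow{c}\{0,2\}$ and the mirror transitions at $\{0,1\}$. In particular a single token remains, so Alice wins, precisely when $a$ is played at $\{0,2\}$ or $b$ is played at $\{0,1\}$; I will call these the \emph{dangerous} letter of each $2$-set.

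Now fix any candidate word $w=a_1a_2\cdots$ for Alice; since a uniform strategy is announced in advance, Bob may use the whole of $w$. I would have Bob maintain the invariant that, just before Alice plays $a_i$, the current position is \emph{safe} for $a_i$, meaning it is not the $2$-set whose dangerous letter is $a_i$ (every position is safe for $c$, and $\{0,1,2\}$ is safe for all letters). After Alice's harmless move, Bob restores the invariant for $a_{i+1}$: reading off Alice's next letter, he plays $c$ to keep the current $2$-set if it is already safe for $a_{i+1}$, and otherwise toggles to the other $2$-set via its \emph{non-dangerous} letter ($b$ from $\{0,2\}$, $a$ from $\{0,1\}$); during an initial run of $c$'s he simply mirrors with $c$. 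A short case analysis over $a_{i+1}\in\{a,b,c\}$ against the three possible positions shows that such a Bob move always exists and never itself produces a singleton, so the invariant propagates and the two tokens survive forever. Hence no fixed word synchronizes against this Bob, and no uniform winning strategy for $\mB'_2$ exists.

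The main obstacle is exactly this final case analysis: I must verify that Bob can always re-establish safety for Alice's \emph{next} letter without ever being forced to play a dangerous letter himself, and that the bookkeeping starts correctly (the initial position $\{0,1,2\}$ and the empty play). The conceptual point to pin down — and the reason $\mB'_2$ but not $\mB_2$ admits this defence — is that the identity letter $c$ lets Bob waste a tempo, so that Alice's predetermined dangerous letter always arrives one step out of phase, landing at the safe $2$-set instead of merging the tokens.
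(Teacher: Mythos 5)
Your proposal is correct, and its second half takes a genuinely different (and arguably more robust) route than the paper. For the positive direction, the paper does not invoke Lemma~\ref{lemma:alice_strategy} at all: it just plays the two-move adaptive strategy directly on the full token set ($a$ first, then $b$ or $a$ depending on Bob's reply); your pair-by-pair reduction is valid but heavier than needed for a three-state machine. For the negative direction, the paper fixes Bob's replies purely \emph{syntactically} from the word $w$: it compares each letter $x_i$ with its successor $x_{i+1}$, inserts $c$ unless $x_i=x_{i+1}\in\{a,b\}$ (in which case it inserts the opposite letter), and then argues that the $c$-free projection of the interleaved word alternates between $a$ and $b$, and that no such alternating word resets $\mB'_2$. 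Your argument instead runs a \emph{semantic} safety invariant on the three reachable positions $\{0,1,2\}$, $\{0,1\}$, $\{0,2\}$: Bob looks at the current position and Alice's next letter, plays $c$ if the position is already safe, and otherwise toggles via the non-dangerous letter. The two strategies coincide on most inputs, but yours is actually the safer formulation: because the paper's rule compares only \emph{adjacent} letters of $w$, it inserts $c$ on a word like $aca$, producing the interleaving $acccac$ whose $c$-free projection is $aa$ --- and $aa$ \emph{does} reset $\mB'_2$. The intended rule must compare $x_i$ with the next non-$c$ letter of $w$; your position-based bookkeeping handles this automatically, since Bob at $\{0,2\}$ facing a forthcoming $a$ toggles to $\{0,1\}$ regardless of how many $c$'s intervene. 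The only thing left to do in your write-up is to actually display the short $3\times 3$ case check you defer to (position versus Alice's next letter), together with the observation that Bob never plays the dangerous letter of his current position and that the terminal position after $a_n$ is therefore a two-element set; all of these are immediate from the transition data you already recorded.
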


\begin{proof}
Alice wins the synchronization game on $\mB'_2$ by choosing $a$ as her first move. If Bob responds with $a$, he already loses; if he responds with $b$ or $c$, then Alice wins by choosing $b$ in the former case and $a$ in the latter.

Now consider an arbitrary word $w=x_1x_2\cdots x_n$ over the input alphabet $\{a,b,c\}$ of $\mB'_2$. For $i=1,2,\dots,n$, insert a letter $y_i\in\{a,b,c\}$ after the letter $x_i$, using the following rules:
\begin{itemize}
  \item if $i=n$ or if $i<n$ and $x_i\ne x_{i+1}$ or $x_i=x_{i+1}=c$, then $y_i:=c$;
  \item $i<n$ and $x_i=x_{i+1}=a$, then $y_i:=b$;
  \item $i<n$ and $x_i=x_{i+1}=b$, then $y_i:=a$.
\end{itemize}
In the resulting word $w'=x_1y_1x_2y_2\cdots x_ny_n$, occurrences of $a$ and $b$ alternate, possibly with blocks of $c$'s inserted between them:
\[
w'=c^{m_1}ac^{m_2}bc^{m_3}ac^{m_4}b\cdots c\quad \text{ or } \quad c^{m_1}bc^{m_2}bc^{m_3}bc^{m_4}a\cdots c
\]
where each $m_i$ is a nonnegative integer (any power with exponent~0 is understood as the empty word).  Since the letter $c$ acts as the identity transformation, the action of $w'$ equals the action of the word $w''$ obtained from $w'$ by omitting all occurrences of $c$. Thus, $w''$ is an alternating word in $a$ and $b$, and has one of the forms $(ab)^ka$, $(ba)^kb$, $(ab)^k$, or $(ba)^k$, where $k$ is a nonnegative integer. None of these is a reset word for $\mB'_2$, whence $w$ is not a uniform winning strategy for $\mB'_2$.
\end{proof}
We can now deduce the following corollary.
\begin{corollary}\label{cor:optimal}
Let\/ $\mathbf{P}$ be a monoid pseudovariety. Uniform winning strategies exist for all \sDFAs{} with transition monoids in\/ $\mathbf{P}$ if and only if\/ $\mathbf{P}\subseteq\mathbf{DS}$.
\end{corollary}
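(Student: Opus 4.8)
The plan is to prove the two implications of the biconditional separately, exploiting the fact that both analytically heavy ingredients have already been isolated: Theorem~\ref{thm:main} handles one direction, while Lemma~\ref{lem:B2prime} together with the exclusion characterization in Lemma~\ref{lem:exclds} handles the other. Thus the whole argument is essentially a matter of assembling these pieces correctly.

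For the implication from right to left, I would suppose $\mathbf{P}\subseteq\mathbf{DS}$. Given any \sDFA{} whose transition monoid lies in $\mathbf{P}$, that monoid then lies in $\mathbf{DS}$, so the DFA is by definition a DS-automaton. Theorem~\ref{thm:main} immediately supplies a uniform winning strategy, and this direction requires no further work.

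For the converse I would argue contrapositively: assuming $\mathbf{P}\not\subseteq\mathbf{DS}$, I will exhibit a single \sDFA{} with transition monoid in $\mathbf{P}$ that admits no uniform winning strategy. The leverage comes from Lemma~\ref{lem:exclds}, which says that $\mathbf{DS}$ is the \emph{largest} pseudovariety omitting $B_2^1$. Since $\mathbf{P}$ is a pseudovariety not contained in $\mathbf{DS}$, it cannot omit $B_2^1$: otherwise maximality would force $\mathbf{P}\subseteq\mathbf{DS}$, a contradiction. Hence $B_2^1\in\mathbf{P}$. Now the automaton $\mB'_2$ has transition monoid isomorphic to $B_2^1$, so its transition monoid lies in $\mathbf{P}$, and by Lemma~\ref{lem:B2prime} it is synchronizing yet admits no uniform winning strategy. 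This is precisely the required counterexample, completing the contrapositive.

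The only conceptual point where care is needed — and the step I would flag as the crux — is the deduction $B_2^1\in\mathbf{P}$ from $\mathbf{P}\not\subseteq\mathbf{DS}$. I would emphasize that this uses the full force of the word ``largest'' in Lemma~\ref{lem:exclds}: it is not merely that $\mathbf{DS}$ omits $B_2^1$, but that \emph{every} $B_2^1$-omitting pseudovariety sits inside $\mathbf{DS}$, so failing to lie inside $\mathbf{DS}$ leaves no possibility but to contain $B_2^1$. Everything else is bookkeeping; in particular one only needs that membership of a monoid in a pseudovariety is invariant under isomorphism, which is what licenses replacing $T(\mB'_2)$ by $B_2^1$ itself.
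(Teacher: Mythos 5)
Your proof is correct and follows exactly the paper's own argument: the `if' direction is Theorem~\ref{thm:main}, and the `only if' direction combines Lemma~\ref{lem:exclds} (maximality of $\mathbf{DS}$ forces $B_2^1\in\mathbf{P}$) with Lemma~\ref{lem:B2prime} applied to $\mB'_2$, whose transition monoid is isomorphic to $B_2^1$. You have merely spelled out the steps the paper leaves implicit.
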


\begin{proof}
The `if' part follows from Theorem~\ref{thm:main}.

For the `only if' part, combine Lemmas~\ref{lem:exclds} and~\ref{lem:B2prime} with the observation made before Lemma~\ref{lem:B2prime} that the transition monoid of the DFA $\mB'_2$ is isomorphic to the monoid $B_2^1$.
\end{proof}

\section{Open questions and a generalization}
\label{sec:discussion}

\subsection{Decidability and complexity}
\label{subsec:decide}

Our description of monoid pseudovarieties containing transition monoids of only those \sDFAs{} that admit uniform winning strategies does not yield a characterization of the class \UWS{} of all such automata. Moreover, unlike synchronization, the property of having a uniform winning strategy (and likewise the property of being an A-automaton) cannot be detected from transition monoids. We have already exhibited two DFAs, $\mB_2$ and $\mB'_2$ (see Figures~\ref{fig:autB21} and~\ref{fig:autB2prime}), that have identical transition monoids, even though one admits a uniform winning strategy and the other does not. In fact, for every \sDFA{} $\mathrsfs{A}=(Q,\Sigma)$, there exists an A-automaton $\mathrsfs{A}'=(Q,\Sigma')$  such that $T(\mA)=T(\mA')$. To see this, let $\Sigma':= \Sigma\cup\{c\}$, where the action of the added letter $c$ coincides with the action of some fixed reset word for $\mA$. The transformations induced by the letters in $\Sigma'$ generate the same submonoid of the full transformation monoid on $Q$ as do those induced by the letters in $\Sigma$. Still, Alice wins the synchronization game on $\mA'$ instantly by choosing the letter $c$ on her first move; in particular, the one-letter word $c$ is a uniform winning strategy for $\mA'$.

It is not hard to see that membership in the class \UWS{} is decidable. If Alice has a uniform winning strategy for a DFA $\mathrsfs{A}=(Q,\Sigma)$, let $w=a_1a_2\cdots a_k$ be such a strategy of minimum length. Define inductively a sequence $C_0,C_1,\dots,C_i,\dots$ of nonempty sets of nonempty subsets of $Q$, called \emph{configurations}, by setting $C_0:=\{Q\}$ and, for $i=1,2,\dotsc$,
\[
C_i:=\{P{\cdot}a_ix \mid P\in C_{i-1}, x\in\Sigma\}, \ \text{ where } \ P{\cdot}a_ix:=\{p{\cdot}a_ix\mid p\in P\}.
\]
Thus, the configuration $C_i$ consists of all subsets of states that may hold tokens after Alice's $i$-th move (namely $a_i$) followed by an arbitrary response of Bob. That $w$ is a uniform winning strategy means that the final configuration $C_k$ consists entirely of singleton sets. The minimality of $w$ ensures that no two configurations can be equal: indeed, if $C_i=C_j$ for some $i<j$, then the shorter word $a_1a_2\cdots a_ia_{j+1}\cdots a_k$ would also be a uniform winning strategy for $\mA$. Hence, the length of $w$ does not exceed the number of distinct configurations that contain at least one nonsingleton set, that is, the number $K:=2^{\,2^{|Q|} - 1} - 2^{\,|Q|}$.
Thus, one can decide whether there exists a uniform winning strategy for $\mathrsfs{A}=(Q,\Sigma)$ by checking, for each of the $|\Sigma|^{\frac{K(K+1)}{2}}$ many words $a_1a_2\cdots a_k$ of length at most $K$ over $\Sigma$, whether $\mathrsfs{A}$ is reset by every word in the finite set
\[
\{a_1b_1a_2b_2\cdots a_kb_k \mid b_1,b_2,\dots,b_k\in\Sigma\}.
\]

Of course, the described brute-force procedure is extremely inefficient, and it is natural to ask whether an efficient algorithm exists---say, one running in time (or at least space) polynomial in the number of states. At present, we do not know the answer to this question.

\subsection{Bounds on uniform winning strategy length}
\label{subsec:bounds}

The only upper bound on the length of a uniform winning strategy for an automaton with $n$ states in the class \UWS{} known to us is the double-exponential bound $2^{\,2^{n} - 1} - 2^{\,n}$ established in Section~\ref{subsec:decide}.

For comparison, Alice requires at most $\binom{n}{2}(n-2)+1$ moves to win the synchronization game on any A-automaton with $n$ states \cite[Corollary 3]{FomMarVol2013}. This cubic bound follows from a `localization' lemma (see Lemma~\ref{lem:pair_condition_game}), which reduces finding a winning strategy to the task of synchronizing 2-element subsets, together with the observation that Alice can achieve synchronization of each 2-element subset in an A-automaton with $n$ states in at most $\binom{n}{2}$ moves. We do not know whether either of these two arguments can be extended to uniform winning strategies; we can only prove that in any $n$-state automaton from \UWS, Alice can synchronize each 2-element subset, following a uniform strategy of length at most $2^{\binom{n+1}{2}}-2^n$.

Regarding lower bounds, we can extract a quadratic lower bound on the length of a uniform winning strategy for an automaton with $n$ states from a construction in the proof of \cite[Theorem 6]{FomMarVol2013}. We reproduce this construction for the reader's convenience.

Let $\mathrsfs{A}=(Q,\Sigma)$ be a DFA with $|\Sigma|\ge2$. Select a letter $b\in\Sigma$ and a state $q_0\in Q$, and let $\mathrsfs{D}=(Q\times\{0,1\},\Sigma)$, where for each $q\in Q$, the action of an arbitrary letter $a\in\Sigma$ is defined as follows:
\[(
q,0){\cdot}a=(q{\cdot}a,1), \qquad (q,1){\cdot}a=
  \begin{cases}
    (q,0) & \text{if $a=b$}, \\
    (q_0,1) & \text{otherwise}.
  \end{cases}
\]
The DFA $\mathrsfs{D}$ is called the \emph{duplication} of $\mathrsfs{A}$.

For illustration, Figure~\ref{fig:cerny-n} presents the duplication of the \v{C}ern\'{y} automaton $\mathrsfs{C}_n$ from \cite{Cer64} (with the state $0$ in the role of $q_0$ and $b$ as the selected letter). We recall that the states of $\mathrsfs{C}_n$ are the residues modulo $n$ and the input letters $a$ and $b$ act as follows:
\[
m{\cdot}a=
 \begin{cases}
  1 & \text{for $m = 0$}, \\
  m & \text{for $1\le m<n$};
  \end{cases}
\qquad m{\cdot}b=m+1\!\!\pmod{n}.
\]

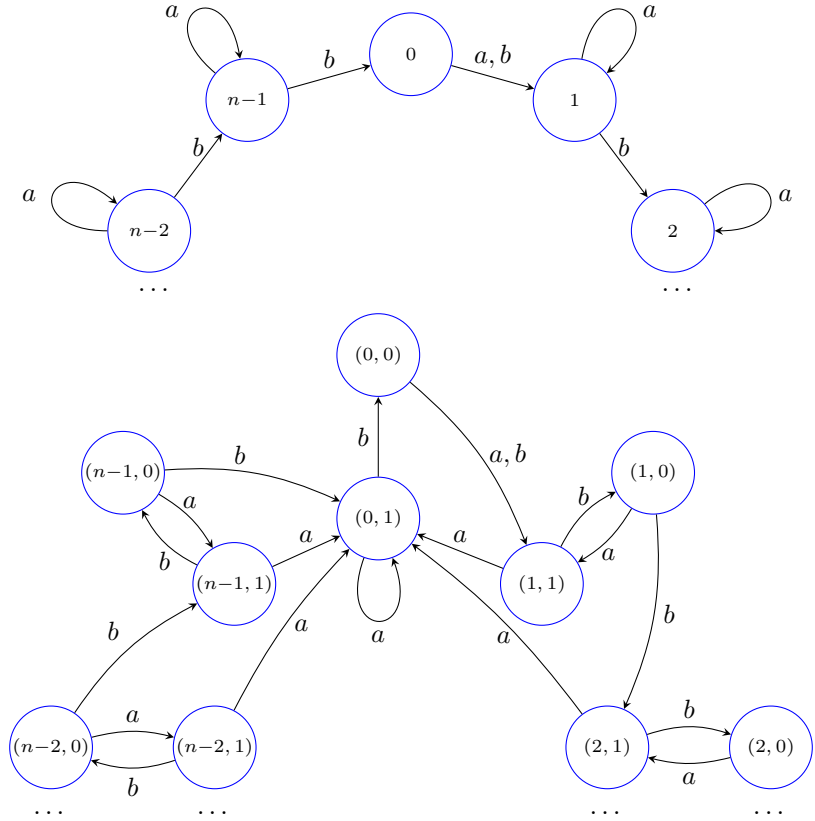
\begin{figure}[ht]
\begin{center}
\begin{tikzpicture}[>=stealth, scale=.87]

% Nodes
\node[draw=blue, circle, minimum size=11mm, inner sep=0pt, font=\scriptsize] (A) at (1.5,8.9) {$n{-}2$};
\node[draw=blue, circle, minimum size=11mm, inner sep=0pt, font=\scriptsize] (B) at (3,10.9) {$n{-}1$};
\node[draw=blue, circle, minimum size=11mm, inner sep=0pt, font=\scriptsize] (C) at (5.5,11.6) {0};
\node[draw=blue, circle, minimum size=11mm, inner sep=0pt, font=\scriptsize] (D) at (8,10.9) {1};
\node[draw=blue, circle, minimum size=11mm, inner sep=0pt, font=\scriptsize] (E) at (9.5,8.9) {2};

% Loops
\draw[->] (D) to [out=80,in=40,looseness=8] node[right,xshift=1mm] {$a$} (D);
\draw[->] (E) to [out=40,in=0,looseness=8] node[right,xshift=0mm] {$a$} (E);
\draw[->] (B) to [out=140,in=100,looseness=8] node[left,xshift=-1mm] {$a$} (B);
\draw[->] (A) to [out=180,in=140,looseness=8] node[left,xshift=-1mm] {$a$} (A);

% Edges
\draw[->] (A) -- node[above] {$b$} (B);
\draw[->] (B) -- node[above] {$b$} (C);
\draw[->] (C) -- node[above] {$a,b$} (D);
\draw[->] (D) -- node[above] {$b$} (E);

% Dots
\node at (1.6,8) {\dots};
\node at (9.6,8) {\dots};

\node[draw=blue, circle, minimum size=11mm, inner sep=0pt, font=\scriptsize] (qn2) at (0,1.0)  {$ (n{-}2,0)$};
\node[draw=blue, circle, minimum size=11mm, inner sep=0pt, font=\scriptsize] (qn1) at (1.1,5.2)  {$ (n{-}1,0)$};
\node[draw=blue, circle, minimum size=11mm, inner sep=0pt, font=\scriptsize] (q0)   at (5.0,7.0)  {$ (0,0)$};
\node[draw=blue, circle, minimum size=11mm, inner sep=0pt, font=\scriptsize] (q1)   at (9.2,5.2) {$ (1,0)$};
\node[draw=blue, circle, minimum size=11mm, inner sep=0pt, font=\scriptsize] (q2)   at (11.0,1.0) {$ (2,0)$};

\node[draw=blue, circle, minimum size=11mm, inner sep=0pt, font=\scriptsize] (qn2_) at (2.5,1.0)  {$ (n{-}2,1)$};
\node[draw=blue, circle, minimum size=11mm, inner sep=0pt, font=\scriptsize] (qn1_) at (2.8,3.5)  {$ (n{-}1,1)$};
\node[draw=blue, circle, minimum size=11mm, inner sep=0pt, font=\scriptsize] (q0_)  at (5.0,4.5)  {$ (0,1)$};
\node[draw=blue, circle, minimum size=11mm, inner sep=0pt, font=\scriptsize] (q1_)  at (7.5,3.5)  {$ (1,1)$};
\node[draw=blue, circle, minimum size=11mm, inner sep=0pt, font=\scriptsize] (q2_)  at (8.5,1.0) {$ (2,1)$};

% curved edges from top row to lower row labelled a or b
\draw[->,bend left=14] (qn2)  to node[midway, above left] {$b$} (qn1_);
\draw[->,bend left=14] (qn1)  to node[midway, above left] {$b$} (q0_);
\draw[->,bend left=14] (q0)   to node[midway, right] {$a,b$} (q1_);
\draw[->,bend left=14] (q1)   to node[midway, right] {$b$} (q2_);

\draw[->,bend left=14] (qn2)  to node[midway, above] {$a$} (qn2_);
\draw[->,bend left=14] (qn1)  to node[midway, above] {$a$} (qn1_);
\draw[->,bend left=14] (q1)   to node[midway, below] {$a$} (q1_);
\draw[->,bend left=14] (q2)   to node[midway, below] {$a$} (q2_);

% reverse b-edges from lower row to top row
\draw[->,bend left=18] (qn2_) to node[midway, below] {$b$} (qn2);
\draw[->,bend left=18] (qn1_) to node[midway, below] {$b$} (qn1);
\draw[->] (q0_)  to node[midway, left] {$b$} (q0);
\draw[->,bend left=18] (q1_)  to node[midway, above] {$b$} (q1);
\draw[->,bend left=18] (q2_)  to node[midway, above] {$b$} (q2);

% edges a from several lower nodes to q0_
\draw[->,bend left=8]  (qn2_) to node[midway, right] {$a$} (q0_);
\draw[->]  (qn1_) to node[midway, above] {$a$} (q0_);
\draw[->]  (q1_)  to node[midway, above] {$a$} (q0_);
\draw[->,bend right=8]  (q2_)  to node[midway, below] {$a$} (q0_);

% loop at q0_
\draw[->] (q0_) to [out=-110,in=-70,looseness=8] node[below] {$a$} (q0_);

% dots (textual ellipses) at approximate positions from original picture
\node at (0,0.0) {$\dots$};
\node at (11.0,0.0) {$\dots$};
\node at (2.5,0.0) {$\dots$};
\node at (8.5,0.0) {$\dots$};
\end{tikzpicture}
\end{center}
\caption{The automaton $\mathrsfs{C}_n$ (above) and its duplication (below), with $q_0=0$ and letter~$b$ selected}\label{fig:cerny-n}
\end{figure}

The proof of \cite[Theorem 6]{FomMarVol2013} shows that if $\mathrsfs{A}$ is a \san\ and $\ell$ is the minimum length of \sws\ for $\mathrsfs{A}$, then Alice wins in the synchronization game on $\mathrsfs{D}$ but needs more than $\ell$ moves to do so.

\v{C}ern\'{y} \cite[Lemma 1]{Cer64} proved that the shortest reset word for the automaton $\mathrsfs{C}_{n}$ is the word $(ab^{n-1})^{n-2}a$ of length $(n-1)^2$. Therefore, for the duplication of $\mC_n$, the the shortest uniform winning strategy has length at least $(n-1)^2$. By construction, this duplication has $2n$ states. Thus, we have a series of $k$-state DFAs ($k=2n$ is even) with quadratic in $k$ length of the shortest uniform winning strategy. A similar series can be constructed for odd~$k$.

We do not know whether the described lower bound is tight.

\subsection{Comparing uniform and adaptive strategies}
It is natural to expect that there exist A-automata that admit a uniform winning strategy, yet on which Alice can win in fewer moves with an adaptive strategy. In Figure~\ref{fig:adaptive-vs-uniform}, there is a simple automaton $\mE$ confirming this expectation.
\begin{figure}[hbt]
\begin{center}
\begin{tikzpicture}[>=stealth, node distance=22mm, auto]
  \tikzstyle{state}=[circle, draw=blue, minimum size=9mm]

  % States
  \node[state] (1) {1};
  \node[state, right of=1] (2) {2};
  \node[state,right of=2] (5) {0};
  \node[state, above of=5] (3) {3};
  \node[state, below of=5] (4) {4};
  \node[state, right of=5] (0) {5};

  % Loops
\draw[->] (1) to [out=160,in=-160,looseness=8] node[left,xshift=-1mm] {$b,c$} (1);
\draw[->] (5) to [out=115,in=155,looseness=8] node[left] {$a,b,c$} (5);
\draw[->] (3) to [out=70,in=110,looseness=8] node[above] {$b$} (3);
\draw[->] (4) to [out=-70,in=-110,looseness=8] node[below] {$c$} (4);
\draw[->] (0) to [out=20,in=-20,looseness=8] node[right] {$a$} (0);

  % ---transitions ---
  \path[->]
    (1) edge node {$a$} (2)
    (0) edge[bend right=18] node[swap] {$b,c$} (5)
    (2) edge[bend right=18] node[swap] {$a$} (5);

  \path[->]
    (2) edge[bend left=18] node {$b$} (3)
    (4) edge[bend left=18] node {$b$} (5)
    (4) edge[bend right=18] node[swap] {$a$} (0);

  \path[->]
    (2) edge[bend right=18] node[swap] {$c$} (4)
    (3) edge[bend left=18] node {$c$} (5)
    (3) edge[bend left=18] node {$a$} (0);
\end{tikzpicture}
\caption{The automaton $\mE$}\label{fig:adaptive-vs-uniform}
\end{center}
\end{figure}

Alice can win the synchronization game on $\mE$ in just two moves. For this, she chooses $a$ as her first move. After that, tokens remain on states 0, 2, and 5. If Bob responds with $a$, then tokens remain on states 0 and 5, and Alice wins by choosing $b$. If Bob responds with $b$ or $c$, then tokens remain on states 0 and 3 or, respectively, 0 and 4, Alice wins by choosing $c$ in the former case and $b$ in the latter.

On the other hand, no word of length 2 is a uniform winning strategy for $\mE$. To show this, we provide, for each word $xy$ with $x,y\in\{a,b,c\}$, a word $zt$ with $z,t\in\{a,b,c\}$ such that the word $xzyt$ does not reset $\mE$; see the table below, whose last row shows the states on which tokens remain after the sequence $xzyt$ of Alice and Bob's moves. Thus, whenever Alice fixes in advance a sequence $x$, $y$ of her first two moves, Bob can prevent synchronization by choosing $z$ and $t$ as his first and second responses, respectively.

\begin{center}
\begin{tabular}{p{2.6cm}||c|c|c|c|c|c|c|c|c}
Alice's moves $xy$ & $aa$ & $ab$ & $ac$ & $ba$ & $bb$ & $bc$ & $ca$ & $cb$ & $cc$\\
\hline
Bob's moves $zt$ & {\red $ba$} & {\red $ba$} & {\red $ca$} & {\red $bb$} & {\red $bb$} & {\red $bc$} & {\red $cc$} & {\red $cb$} & {\red $cc$}\\
\hline
Word $xzyt$ & $a{\red b}a{\red a}$ & $a{\red b}b{\red a}$ & $a{\red c}c{\red a}$ & $b{\red b}a{\red b}$ & $b{\red b}b{\red b}$ & $b{\red b}c{\red c}$ & $c{\red c}a{\red c}$ & $c{\red c}b{\red b}$ & $c{\red c}c{\red c}$\\
\hline
Tokens still on & 0,5 & 0,5 & 0,5 & 0,3 & 0,1,3 & 0,1 & 0,4 & 0,1 & 0,1,4
\end{tabular}
\end{center}

Inspecting Figure~\ref{fig:adaptive-vs-uniform}, one readily sees that the automaton $\mE$ is weakly acyclic, and hence, it is a DS-automaton (see Section~\ref{subsec:corollaries}). By Theorem~\ref{thm:main}, $\mE$ admits a uniform winning strategy.

In fact, the word $abc$ is a uniform winning strategy for $\mE$. To concisely present the verification of the claim, we again organize our computations in the form of a table. The following table shows all possible Bob's responses $z$ and $t$ to Alice's first two moves $a$ and $b$ and the states on which tokens remain after the sequence $azbt$ of Alice and Bob's moves. We see that the nonsingleton sets in the last row are $\{0,3\}$ and $\{0,5\}$, both sent to 0 by the action of $c$. Hence, each word of the form $azbtc$ resets $\mE$.

\begin{center}
\begin{tabular}{p{2.6cm}||c|c|c|c|c|c|c|c|c}
Bob's moves $zt$ & {\red $aa$} & {\red $ab$} & {\red $ac$} & {\red $ba$} & {\red $bb$} & {\red $bc$} & {\red $ca$} & {\red $cb$} & {\red $cc$}\\
\hline
Word $azbt$ & $a{\red a}b{\red a}$ & $a{\red a}b{\red b}$ & $a{\red a}b{\red c}$ & $a{\red b}b{\red a}$ & $a{\red b}b{\red b}$ & $a{\red b}b{\red c}$ & $a{\red c}b{\red a}$ & $a{\red c}b{\red b}$ & $a{\red c}b{\red c}$\\
\hline
Tokens still on & 0 & 0 & 0 & 0,5 & 0,3 & 0 & 0 & 0 & 0
\end{tabular}
\end{center}

Thus, there is a price to pay: while uniform strategies are easier to describe, they may not exist for some A-automata (see Lemma~\ref{lem:B2prime}), and even when they do exist, they may be less efficient. An intriguing open question is how large the speed advantage of an adaptive strategy can be.

\subsection{Synchronization game with a modified rule}

Here we modify the game rule described in Section~\ref{subsec:rules}, allowing Bob to make any finite sequence of moves in reply to each Alice's move or just skip the move. In more precise terms, once Alice completes her move in the game on a DFA $(Q,\Sigma)$, Bob can choose an arbitrary (maybe empty) word $w$ over $\Sigma$ rather than a letter in $\Sigma$ and move tokens still present on $Q$ after Alice's move according to the transformation caused by $w$. The modified rule seems to be better suited for using synchronization games to model reliably controlled systems that must perform tasks in the presence of natural or artificial interferences. Indeed, nature or an adversary does not necessarily act in the same rhythm as us, nor are they required to respond to each of our actions with exactly one obstacle or counteraction.

The modified rule looks much more favorable to Bob. It is easy to produce examples of A-automata on which Bob wins in the modified synchronization game. For instance, consider the DFA $\mF$ depicted below.
\begin{center}
\begin{tikzpicture}
	\node[fill=white, circle, draw=blue, scale=1] (0) {$0$};
	\node[fill=white, circle, draw=blue, scale=1, below = 0, xshift= -2cm, yshift= -1cm] (2) {$2$};
    \node[fill=white, circle, draw=blue, scale=1, right of = 2, xshift= 3cm] (1) {$1$};
	\draw
		(0) edge[-latex, above]  node{$a,b$} (1)
        (1) edge[-latex, below]  node{$b$} (2)
		(2) edge[-latex, bend right, below] node{$c$}(1)
		(2) edge[-latex, above]  node{$b$} (0)
		(0) edge[-latex, loop above, above, in =65, out = 120, distance = 30] node{$c$} (0)
        (1) edge[-latex, loop right, right, in =-30, out = 30, distance = 30] node{$a,c$} (1)
        (2) edge[-latex, loop left, left, in =-150, out = 150, distance = 30] node{$a$} (2);
		\end{tikzpicture}
\end{center}
Under the rule of Section~\ref{subsec:rules}, Alice wins the game on $\mF$ by choosing the letter $c$ on her first move. If Bob responds with $a$, he loses, and if he responds with $b$ or $c$, Alice wins by choosing $c$ or, respectively, $a$. Under the modified rule, however, Bob has a winning strategy on $\mF$. It consists of responding with the letter $b$ whenever Alice chooses $a$ and with the word $b^2$ whenever Alice chooses $b$ or $c$.

In view of this and similar examples, it may seem a bit surprising that the conclusion of Theorem~\ref{thm:main} persists under the modified rule. Indeed, analyzing the proof of Theorem~\ref{thm:main}, one sees that it works just as well if Bob is allowed to choose a word $w_i$ rather than a letter $b_i$ in response to Alice's $i$-th move when she spells out the word $a_1a_2\cdots a_n$. We merely apply to the word $a_1w_1a_2\cdots a_nw_n$ the argument used in the proof for the word $u=a_1b_1a_2b_2\cdots a_nb_n$. Thus, the uniform winning strategy of Theorem~\ref{thm:main} is fairly robust.

We also note that the upper bound on uniform winning strategies from Section~\ref{subsec:decide} continues to hold under the modified rule, since the configuration counting argument still applies.

\end{document}